\def\ps@headings{%
\def\@oddhead{\mbox{}\scriptsize\rightmark \hfil \thepage}%
\def\@evenhead{\scriptsize\thepage \hfil \leftmark\mbox{}}%
\def\@oddfoot{}%
\def\@evenfoot{}}
\begin{document}

%\title{Optimal Node Selection for Data Regeneration in Heterogeneous Distributed Storage Systems}
\title{Topology-Aware Node Selection for Data Regeneration in Heterogeneous Distributed Storage Systems}
\author[$\dag$]{Qingyuan Gong}
\author[$\dag$]{Jiaqi Wang}
\author[$\dag \natural$]{Yan Wang}
\author[$\dag$]{Dongsheng Wei}
%\author[$\natural$]{Xunrui Yin}
\author[$\star$]{Jin Wang}
%\author[$\ast$]{Jun Li}
\author[$\dag$]{Xin Wang}
\affil[$\dag$]{School of Computer Science, Fudan University, Shanghai, China}
%\affil[$\dag$]{ Research Center of Cyber Security Auditing and Monitoring, Ministry of Education, China}
\affil[$\natural$]{School of Software, East China Jiaotong University, China}
%\affil[$\natural$]{De Engineering partment of Computer Science, University of Calgary, Canada}
\affil[$\star$]{Department of Computer Science and Technology, Soochow University, China}
%\affil[$\ast$]{Department of Electrical and Computer Engineering, University of Toronto, Canada}
%\affil[ ]{\textit{\{qgong12,jqwang11,12210240069,11110240029,xinw\}@fudan.edu.cn, xunyin@ucalgary.ca}}

\maketitle
\newtheorem{theorem}{Theorem}
\newtheorem{lemma}{Lemma}
\newtheorem{definition}{Definition}
\newtheorem{property}{Property}
\newtheorem{corollary}{Corollary}
\newtheorem{proposition}{Proposition}
\newtheorem{claim}{Claim}

\newcommand{\D}{\mathcal{D}}
\newcommand{\C}{\mathcal{C}}
\newcommand{\M}{\mathcal{M}}
\newcommand{\G}{\mathcal{G}}

\renewcommand{\S}{\mathsf{S}}
\newcommand{\DC}{\mathsf{DC}}
\newcommand{\Beta}{\pmb{\beta}}

\linespread{1.03}

\section{Abstract}

Distributed storage systems introduce redundancy to protect data from node failures. After a storage node fails, the lost data should be regenerated at a replacement storage node as soon as possible to maintain the same level of redundancy. Minimizing such a regeneration time is critical to the reliability of distributed storage systems. %Existing work commits to reduce the regeneration time by either minimizing the regenerating traffic, or adjusting the regenerating traffic patterns, whereas nodes participating data regeneration are generally assumed to be given beforehand. Real-world distributed storage systems usually exhibit heterogeneous link capacities, and the regeneration time is highly related to the selection of the participating nodes.
Existing work commits to reduce the regeneration time by either minimizing the regenerating traffic, or adjusting the regenerating traffic patterns, whereas nodes participating data regeneration are generally assumed to be given beforehand. However, such regeneration time also depends heavily on the selection of the participating nodes. Selecting different participating nodes actually involve different data links between the nodes. Real-world distributed storage systems usually exhibit heterogeneous link capacities. It is possible to further reduce the regeneration time via exploiting such link capacity differences and avoiding the
link bottlenecks.

\section{Introduction} \label{sec: introduction}
Distributed storage systems ({\em e.g.} GFS \cite{ghemawat2003google} and HDFS \cite{shvachko2010hadoop} )are widely used
%to support applications like social networks, big file download, and video sharing. Users could enjoy reliable data service based on
%to provide reliable data storage and access service,
in data centers. For such storage systems with thousands of nodes, node failures occur frequently. According to measurements on Facebook's warehouse cluster, the median number of machine-unavailability events is more than $50$ per day \cite{a2013solution}. Failed storage nodes usually corrupt their data blocks. Redundant data must be introduced for promoting both the data durability and the quality of service.
% In order to maintain redundancy, the lost data should be regenerated at a replaced storage node called newcomer as soon as possibl.
 %OceanStore \cite{rhea2003pond} and Total Recall \cite{bhagwan2004total}.

%The standard redundancy method in industrial storage systems is replication. Three or more replicas are stored in the system
Replication is a widely-adopted method in industrial storage systems to enhance data reliability, where three or more replicas are
stored in the system \cite{ghemawat2003google, shvachko2010hadoop}. But this inevitably introduces a high storage overhead. To improve storage efficiency, erasure codes ({\em e.g.} \cite{weatherspoon2002erasure, rhea2003pond, bhagwan2004total, huang2012erasure}) have long been proposed to substitute the simple replication strategy in many designs.

A regenerating coded storage system is usually described as $(N,n,k,d,\alpha)$. $N$ is the total number of storage nodes in the system. The original file of size $M$ is divided equally into $k$ pieces, and these pieces of data are encoded into $n$ units to be stored separately on $n$ storage nodes in the system. Each storage node holds $\alpha$ coded data blocks. After a storage node fails, it is required to regenerate the same amount of data blocks at an alternative node (called {\em{newcomer}}) as soon as possible. The regeneration process usually includes downloading data from $d$ survival nodes (called {\em{providers}}), and encoding the redundant data at the newcomer.

%Node failure can be caused by many factors, {\em e.g.,} power down,
In storage systems, when a node fails, redundant data will be lost and the original file will be unreliable. Considering the frequency of node failure, data repair should be arranged on a routine basis. Especially, in case more storage nodes beak down that less than $d$ survival nodes are available to reconstruct the original file, we need to regenerate the failed data as soon as possible to keep the redundancy level of the storage system.

For erasure codes based storage systems, Dimakis \emph{et al.} \cite{dimakis2010network} found that the amount of downloaded data can be reduced by pre-coding at each provider, which consequently reduces the regeneration time.
% The ability of reconstructing the original file by accessing any $k$ storage nodes is called MDS (Maximum Distance Separable) property \cite{hu2012nccloud}.
However, the regeneration time depends not only on the amount of downloaded data but also on the capacities of the download links. Link capacity in this paper means the available bandwidth between any two storage nodes for data regeneration. %As shown in Fig.~\ref{Fig:5},

Data center networks are usually constructed in hierarchy and present heterogeneous characters \cite{ghemawat2003google, shvachko2010hadoop, fat-tree2009}. Available bandwidth between any two storage nodes varies greatly. In a Fat-tree based data center, for example, links between nodes in the same rack are able to provide much larger available bandwidths than inter-rack connections \cite{fatreee2008al}. Meanwhile, storage nodes may be assigned different services, generating different background traffics \cite{cloudcost2008cost, benson2010network, benson2010understanding}. The disparity of link bandwidths becomes even larger between storage servers of different generation \cite{googleArchitecture2003web}.

Existing works on minimizing the regeneration time assume that the set of providers and the newcomer are given in advance. However, we notice that this assumption imposes limitations on regeneration time reduction in practice, due to the fact that different participating nodes do affect the regeneration time. If the selection of providers or newcomer introduces low capacitated links into the repairing topology, the regeneration time may heavily be prolonged. This is prone to happen especially in heterogeneous networks when the providers and newcomer are selected randomly.
%Bandwidth heterogeneity becomes a distinct feature of distributed storage systems. It is even more striking when multiple geo-distributed data centers are used for storage service. For example, Facebook distributes a massive photo-serving stack geographically to support its storage and service of billions of photos \cite{huang2013analysis}. Perfrmance of data regenerating maybe different because of different newcomers. The regeneration time depends on not only the amount of data transmitted from the providers, but the bottleneck bandwidth of the links connected to the newcomer.

%\begin{figure}[!htb]
%\centering
%%\hspace*{-0.15in}
%\includegraphics[width = 9cm]{./figure/print/fat-tree.pdf}
%\caption {A rea topology of distributed storage systems. For regenerating codes with parameters $n=5, k=2, d=3$, coded blocks of the original file are stored on $5$ storage nodes. When a storage node fails, we need to recover the lost blocks on another node called newcomer, based on the coded blocks from $3$ providers. Heterogeneous bandwidth of links between nodes in the system can heavily impact the regeneration time.}
%\label{Fig:5}
%\end{figure}

\begin{figure}[!htb]
\centering
%\hspace*{-0.15in}
\includegraphics[width = 9cm]{./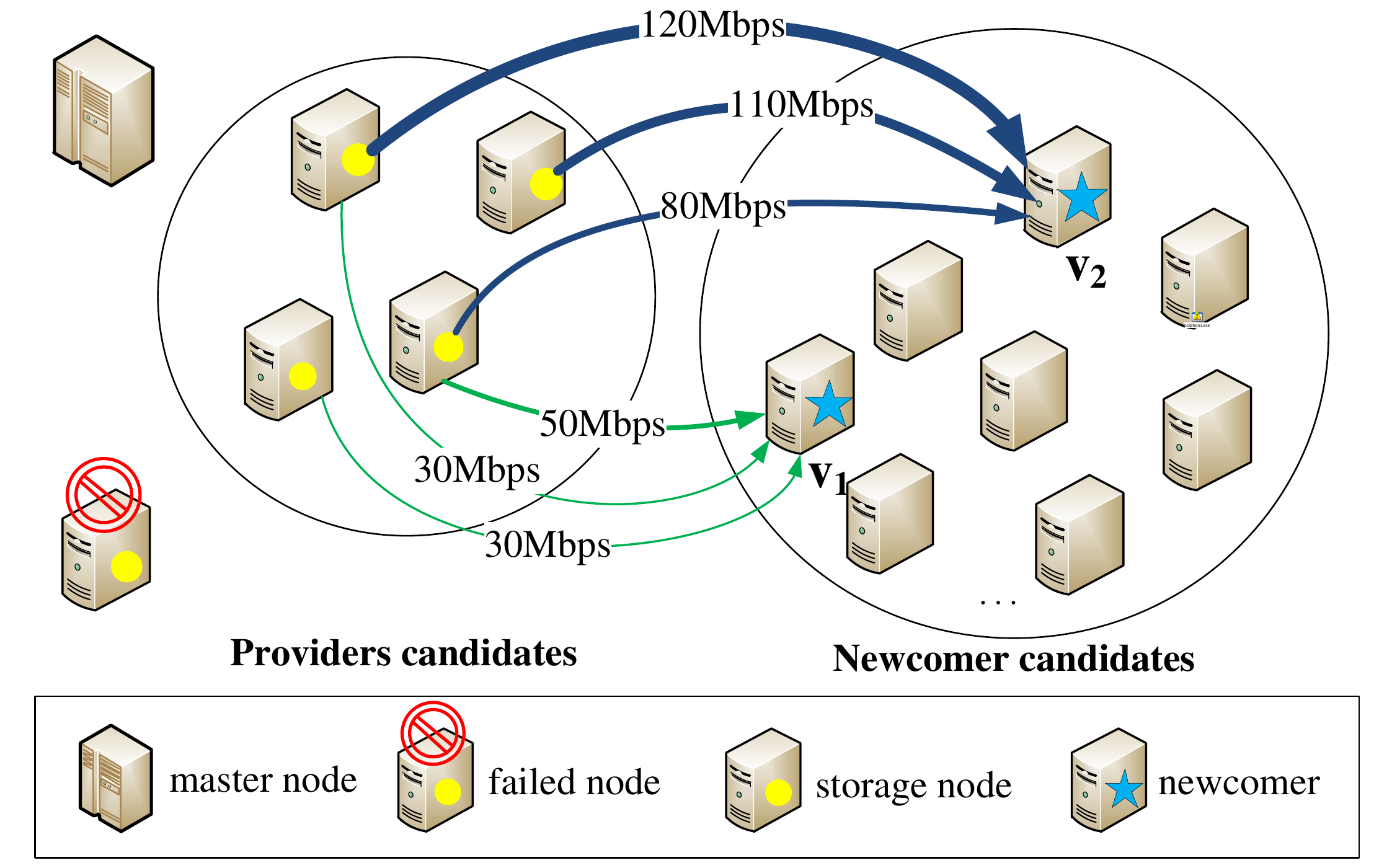}
\caption {Examples for various node selections. All survival nodes are providers candidates and others are newcomer candidates. $M= 480$ Mb. $\alpha = M/k = 240$ Mb. $\beta$=$\frac{\alpha}{d-k+1} = 120$ Mb. Available bandwidth of links in the topology is determined by the selection of the newcomer and providers. The regeneration time of the topologies with $v_1$ and $v_2$ being the newcomer is 4s and 1.5s, respectively.}
\label{Fig:6}
\end{figure}

%\begin{figure}
%\centering
%%\setcounter{subfigure}{0}
%%\hspace{\fill}%
%%\subfloat[]{\includegraphics[width=5.64cm,height=4.7cm]{fig/resultatgroup.eps}\label{fig:resultatgroup}}
%%\hfil%\hspace{\fill}%
%\subfigure[]{\includegraphics[width=0.40\textwidth]{./figure/print/v11.pdf}\label{fig:resultgroup}}
%%\hspace*{\fill}%
%%\hspace{\fill}%
%\subfigure[]{\includegraphics[width=0.40\textwidth]{./figure/print/v2.pdf}\label{fig:resultatk}}
%%\hspace{\fill}%
%\subfigure[]{\includegraphics[width=0.40\textwidth]{./figure/print/v3.pdf}\label{fig:resultk}}
%%\hspace*{\fill}%
%\caption{Simulation results.}
%\end{figure}

The situation can be illustrated with the example shown in Fig.~\ref{Fig:6}. We consider an overlay network along with a master node, %to, where the storage nodes in one rack are connected to the same switch and communications between racks have to pass switches. We
and assume that the link capacity between any two storage nodes obeys a uniform distribution on the interval $[10,120]$Mbps, which is commonly used in the literature \cite{lee2005measuring}. %in a same rack is 100Mbps, and the bandwidth between nodes located in different racks is 20Mbps.
Suppose an original file of $M = 480Mb$ is stored based on a $(n=5,k=2)$ MDS (Maximum Distance Separable) code. MDS code requires $n = 5$ storage nodes to store the coded data, each holding $\alpha = M/k = 240Mb$; and accessing any 2 out of the 5 storage nodes is able to reconstruct the file. We set $d = 3$ for data regeneration that the newcomer should download data from $3$ providers to regenerate the failed data. If a storage node fails, any other nodes not involved in storing the original file could be the newcomer, forming the newcomer candidates set on the right of the figure. The remaining $n-1 = 4$ survival nodes constitute the set of provider candidates on the left. A newcomer needs to download $\beta = \frac{\alpha}{d-k+1} = 120Mb$ from each of the $d = 3$ providers to regenerate the lost data blocks. %Conventionally, this process is completed through a sta-structured topology.
Fig.~\ref{Fig:6} demonstrates two possible selections of the newcomer and providers. Under random selection, we possibly have the newcomer $v_1$ and the three providers with arrows pointing to $v_1$. Clearly, this random selection leaves a bottleneck bandwidth of 30Mbps. Then, the regeneration time turns out to be $\frac{\beta}{30Mbps} = 4$ seconds. However, through a specific selection scheme for deliberately avoiding low capacitated links in the regeneration progress, we can obtain the newcomer $v_2$ and three providers with arrows pointing to $v_2$. The bottleneck bandwidth of the repairing topology is improved to 80Mbps, and the regeneration time is then reduced to $\frac{\beta}{80Mbps} = 1.5$ seconds. Obviously, a proper node selection scheme can substantially improve the minimization of the regeneration time by finding high-capacitated links. %To the best of our knowledge, little has been found in literature on the investigation of the node selection problem.

Identifying the high-capacitated links in overlay networks may not need much effort. However, we need to guarantee that these selected high-capacitated links from such a considerable candidate set really form a repairing topology. The $n-1$ survival nodes are candidates for $d$ providers, and all storage nodes in the storage system excluding the $n$ holding coded blocks of the original file are newcomer candidates. In Fat-tree architecture, it's even harder to obtain the preferred links because a physical link tends to be shared by many data flows. Moreover, time complexity of the node selection algorithm must be limited to minimize the extra time cost.

In distributed storage systems, it is often the case that a few storage nodes manage and control all nodes in the cluster. For instance, each GFS cluster consists of a master node to hold file system metadata, maintaining property information about all chunks in the cluster \cite{ghemawat2003google}. HDFS employs NameNode to trace attributes of DataNodes, recording their status information, modifying, or accessing events \cite{shvachko2010hadoop}. Node selection for data regeneration is possible to be realized under the control of a master node.

%In distributed storage systems based on erasure codes such as Reed-Solomon(RS) codes, a common practice to recover a storage node is to find a substitute node in the system, download blocks from other survival nodes, reconstruct the original data object, then encode  the lost block and store it at the substitute node.
%Data blocks are transmitted through a sta-structured topology generally. But how to select the substitute node is not concerned.
In this paper, we study the problem of how to select the participating nodes in heterogeneous networks, to form a repairing topology, aiming to minimize the regeneration time. To the best of our knowledge, this is the first investigation of the node selection problem for data regeneration. %has not been investigated in any previous studies.
Our contributions consist of three parts:
\begin{enumerate}
\item We jointly consider the selection of the newcomer and providers to minimize the regeneration time for distributed storage systems employing regenerating codes. We conduct experiments based on end-to-end available bandwidth from real network. %Experiment results show that our schemes can reduce the regeneration time by 58.56\% on average, 95.32\% in the most bandwidth heterogeneous networks and 15.98\% in comparatively balanced networks, compared with general regeneration designs of random node selection.

\item We incorporate the observation of flexible regeneration and obtain enriched node selection algorithms, where the amount of data downloaded from each provider is dynamically determined by link capacities.
%We compare its performance with node selection algorithm and find that our algorithm can further reduce the regeneration time.
 Simulation results show that node selection scheme with flexible regeneration traffic can further reduce the regeneration time.

\item We also consider the hierarchical topology and shared links in real-world data center networks. We analyze the data regeneration time and propose a heuristic node selection algorithm with better time complexity. The efficiency of this algorithm is pronounced in our trace driven simulations.
\end{enumerate}

%It's common for large distributed storage systems consisting of a few data centers over the world. $Yahoo!$ has five major data centers located in America, China, and United Kingdom \cite{chen2011first}. As the distribution of services in each location appears its own routine, data stored at each data centers needs to accommodate the regional users.
 %The bandwidth is extremely higher compared with data recovery in replication based systems. Because when replication is used, recovery simply means data copying from any other survival node, without any additional network communication
%Considering the space consumption and updating cost of the -replica method, erasure code is gradually replacing replica method.
%Simulations are conducted on ns-2

Node selection schemes in overlay networks have been studied in our previous work \cite{qygong2015}. In this paper, we extend node selection schemes to the regenerating topologies in the real-world data centers. Transmission paths of real-world hierarchical architecture are much more complicate. We first analyze how the regeneration time should be computed in Fat-tree topology, which is totally different in overlay networks. Optimal regeneration time can be obtained through the selection of participating nodes based on traversal or greedy algorithms in overlay networks. However, as demonstrated in our analysis, the time complexity will be enormous if we apply traversal method in Fat-tree topology. To settle this problem, we propose a heuristic algorithm with time complexity $O(n\lg n)$ to select the providers and the newcomer. Although there may exist rare cases that another repairing topology achieves better regeneration time than the one obtained by our algorithm, exceptions can be tolerated for the great reduction of time cost.

The remainder of the paper is organized as follows. Sec.~\ref{sec: Related Work} covers the related work about the determination of participating nodes for data regeneration in distributed storage systems. In Sec.~\ref{sec: Network Model}, we formulate the node selection problem. Sec.~\ref{sec: Algorithm} proposes the node selection algorithms and discuss its overhead. In Sec.~\ref{sec: star-flexible}, we enhance our algorithm by introducing the idea of flexible end-to-end traffic. Node selection problem in the real-world data center architecture is discussed in Sec.~\ref{sec: fat-tree}. The experimental results are in Sec.~\ref{sec: evaluation}. This paper is concluded in Sec.~\ref{sec: Conclusion}.

\section{Related Work} \label{sec: Related Work}

Data center networks(DCN) usually present a hierarchical topology consisting of routers, switches, and storage nodes \cite{fatreee2008al}. Servers are deployed in racks. Communications between servers must go through higher layer switches of the network. For example, in a GFS cluster, servers in a same rack can interconnect via a $100Mbps$ Ethernet access switch; while servers in different racks must go through a core gigabit switch \cite{googleArchitecture2003web}. %Cables are various even within a same cluster.
In addition, distributed storage systems usually support a variety of applications. Service can be assigned dynamically to any storage node and generate different amount of network traffic \cite{cloudcost2008cost, benson2010network, benson2010understanding}. Competition between applications may enlarge the difference of available bandwidth between two storage nodes for data regeneration. For large distributed storage systems, servers last within two or three years \cite{googleArchitecture2003web}. Aged machines are much slower than current-generation ones.

Lee \emph{et al.} measured available bandwidths between nodes on PlanetLab \cite{lee2005measuring}. From their measurement results, the minimum bandwidth can be as small as $0.3Mbps$; and over a portion of 99.22\% falls in the interval $[0.3,120]Mbps$. Furthermore, to fully use the core switches of high access rate, oversubscription is widely introduced in real DCNs \cite{fatreee2008al, cloudcost2008cost}. Higher oversubscription means lower bandwidth servers can employ between an access switch of a rack and a core switch. Typical values of oversubscription are designed within a range from 2.5:1 to 80:1 for a gigabit core switch, to achieve a minimal link capacity between $400Mbps$ and $12.5Mbps$ \cite{cisco2007oversubscription, googleArchitecture2003web}.

Since node failure occurs frequently in distributed storage systems, redundancy is introduced based on three-way replication or coding to assure data reliability. In replicated systems, if a replica node breaks down, the lost data will be cloned to a substitute node from the other two survival nodes. The substitute node is chosen under these criteria: 1) utilization of the disk space should be balanced; 2) the number of recent replica creation should be limited for one node; 3) replicas of a same file should be stored across racks \cite{ghemawat2003google, shvachko2010hadoop}. The latter two aim to complete the repair task as fast as possible, and avoid the repair traffic overwhelming the client traffic in the meanwhile.

Because of the high storage overhead of three replicas, it has been replaced by coding redundancy methods in many system designs.
%Microsoft proposed Microsoft Azure Storage (MAS) that could allow users to store seemingly unlimited amounts of data for unlimited duration of time \cite{calder2011windows}.
Data repair in erasure coded distributed storage mainly involves data transmission from $d$ providers to a newcomer. Regeneration time cost lies in the transmission process. Dimakis \emph{et al.} \cite{dimakis2010network} proposed the model to reduce the amount of data blocks needed to transmit for data recovery. In this model, a tradeoff curve is given between storage cost and repair bandwidth, leaving no attention on how to determine the newcomer among many candidates. There are two special points on the curve, called MSR (minimum storage regenerating) codes where storage efficiency is the best but repair bandwidth is not optimal, and  MBR (minimum bandwidth regenerating) codes where minimum repair bandwidth is achieved at the sacrifice of storage efficiency.
% Its substitute node is also appointed by default. As no dedicate selection algorithm is applied, we can take it for granted that the substitute node is selected randomly in previous designs. Specially, the substitute node is called \emph{newcomer} in the first model \cite{dimakis2010network} and we will take the two expressions  in the following parts.

Although MBR point has been paid much attention to minimize the regeneration time, minimum repair bandwidth will not assure minimum regeneration time in heterogenous distributed storage systems. Considering the heterogeneous link capacity, Li {\em et al.} \cite{li2010tree} proposed to reduce the regeneration time by adaptively selecting the high capacitated links. In the above models, the number of data blocks transmitted from the providers are the same, satisfying both exact and functional repair. The regeneration time is affected by the link capacity alone. To explore further reduction of regeneration time, Wang \emph{et al.} \cite{yan2014infocom} supposed that the data blocks downloaded from each provider can be flexibly determined based on the available bandwidths of the links. They proposed Flexible Tree-structured Regeneration (FTR) scheme, achieving the approximate minimum regeneration time.

Two-rack model is proposed by Gast\'{o}n in \cite{gaston2013realistic}. In this model, nodes are placed in two racks. Available bandwidths of intra-rack and inter-rack links are different. Providers in this model are chosen simply according to the location of the newcomer, with a view to employing higher bandwidth links. The newcomer is still given previously.

In the process of regeneration time minimization, nearly all designs assume that the set of providers and the newcomer are given in advance. The impact of different providers and newcomer in heterogenous link capacity environment is neglected.

Exact repair and functional repair are two versions of repair considered in the literature \cite{dimakis2011survey}. Exact repair requires the lost encoded blocks to be restored exactly. Functional repair reduces this requirement, allowing the newly generated blocks to be different from the lost ones, as long as the MDS property is still maintained, {\em i.e.,} the original file can be reconstructed by accessing any $k$ out of $n$ storage nodes. The model proposed by Dimakis is built on functional repair. Regenerating codes on the curve can be achieved by random linear network codes. The two interesting points MSR and MBR have also been realized with the extra constraint of exact repair \cite{exactrepairMBR2009exact, exactMSR2010exact, exactMSR22010distributed, msrcode2015MSR}. Exact repair codes present much better properties than random-network-coding-based functional repair codes.

%Data center networks are usually constructed in hierarchical way. Available bandwidth between any two storage nodes is not the full bandwidth of their network interface (e.g., $1Gbps$) because of oversubscription \cite{fatreee2008al, cloudcost2008cost}. %Intra-rack communications enjoy much larger available bandwidths than inter-rack.

The data regeneration problem in distributed storage systems has been widely explored. But it is worth mentioning that, the issue of how to select the providers and the newcomer is generally ignored. We propose optimal node selection schemes, aiming to construct repairing topologies to minimize the regeneration time. We use network simulator to evaluate our algorithms. Simulation results are obtained more independently in this way.%Network simulator ns-2 is wildly used for wireless and wired networks. %\cite{ns2sensornetwork2009aqua, ns2wimax2006, ns2datacenter2013dens, ns2data2013scda}. Through a soft real-time scheduler, it becomes a live network \cite{ns2scheduler2004real}.

\section{problem formulation}\label{sec: Network Model}
%Specially, the substitute node is called \emph{newcomer} in regenerating codes and we will take this expression in the following paper.
In this section, we first present the data regeneration process and analyze previous node determination schemes. Then we formulate the problem of node selection for data regeneration. %The definition of variables used in the schemes are given together in Table~\ref{tab:1}.

Given a regenerating coded storage system $(N,n,k,d,\alpha)$, %where $N$ is the total number of nodes in the system and $n$ is the number of nodes holding $\alpha$ coded data blocks concerning a specific file. %Specifically, we call the $n$ nodes holding coded blocks {\em storage nodes}.
%The original file size is $M$. It is divided into $k$ blocks, which are used to encode the $n\times \alpha$ blocks. %The MDS property requires that the original file should be reconstructed by accessing any $k$ out of the $n$ storage nodes \cite{hu2012nccloud}.
when a storage node fails, the $d$ out of $n-1$ storage nodes can be accessed to recover the lost data blocks at a newcomer. The master node is treated as a control server, other than a storage node. We call the failed storage node as $v_f$, and the selected newcomer as $v_n$. For regenerating codes with uniform end-to-end traffic, each provider needs to transmit $\beta$ blocks to the newcomer \cite{dimakis2010network}. %Thus the network traffic induced by data regeneration is $d\beta$.
To maintain the same level of redundancy, we assume the newcomer $v_n$ %cannot be one of the $n-1$ storage nodes. It
can only be selected from the other $N-n$ nodes in the system. %Recovery of data blocks is realized by accessing $d$ providers and transmitting coded data blocks to the newcomer.
The repairing topology consists of $d$ providers, a newcomer and links from the providers to the newcomer. %Most previous data recover models did not provide any special scheme to select the newcomer or the providers set. It can be regarded that the newcomer and providers in the previous designs are selected randomly.
 In some regenerating models, the failed device is mended or replaced with a new one and the newcomer $v_n$ is assumed to be identical to $v_f$ by default. As the failure of storage nodes occurs randomly, we regard this case equivalent to that the newcomer is randomly selected.% in the evaluation.

In this paper, we consider the situation of single node failure, and utilize the conventional star-structured repairing topology where each provider transmits coded data blocks directly to the newcomer. %In the star structure, the newcomer is the center and providers are the stations transmitting data to the center.
Our aim is to find a newcomer and $d$ providers to minimize the regeneration time. %Edges of the topology are only the links between the $d$ providers and the newcomer.
We use a complete graph $G(V, E)$ \cite{li2010tree, yan2014infocom} to represent the overlay network, where $V$ is the set of storage nodes in the distributed storage system and $E$ is the set of communication links between storage nodes. $G(V', E')$ is a subnetwork of $G(V, E)$, consisted of the selected $d$ providers and the newcomer. Let $V_n$ represent the set of newcomer candidates, $V_p$ represent the set of providers candidates, and $P$ represents the $d$ selected providers. $G(V', E')$ can also be expressed as a tuple $[P(v_n), v_n]$ to indicate the providers and newcomer of a repairing topology. To guarantee that a newcomer and $d$ providers can be found, we assume that $|V_n| > 0, |V_p| \geq d$. To provide the same level of redundancy, the $n-1$ storage nodes will not be considered to be a newcomer in the repairing process. Thus $V_n \cap V_p = \varnothing$ and $V_n \cup V_p = V - \{v_f\}$. Since there are $N$ storage nodes totally in the cluster, $|V_n|=N-n$ and $|V_p|=n-1$.

For any two nodes $u, v$ in $G(V', E')$, let ${c(u, v)}$ denote the link capacity of $(u, v)$ and $f(u, v)$ denote the number of data blocks transmitted on $(u, v)$. Link capacity means the available bandwidth between any two storage nodes for data regeneration. We can ignore the encoding time on providers and the decoding time on the newcomer, as these two operations can be pipelined with the data transmission \cite{li2010tree, yan2014infocom}. Given that data blocks are transmitted simultaneously from the $d$ providers to the newcomer, we may represent the regeneration time as
\begin{equation}\label{gqy_e1}
t =  max\{\frac{f(u, v)}{c(u, v)} \ |\ (u,v)\in E'\}.
\end{equation}

According to Equation (\ref{gqy_e1}), if each provider transmits $\beta$ blocks to the newcomer, the regeneration time is determined by the minimum link capacity between the $d$ providers and the newcomer, {\em {i.e.,}} the bottleneck bandwidth of the repairing topology. In distributed storage systems, link capacities between any two storage nodes vary over a wide range because of the hierarchical structure, oversubscription, and real-time service. We assume link capacities in the system obey a uniform distribution on an interval such as $[10, 120]$Mbps \cite{lee2005measuring}. Links in $G(V', E')$ are determined by the $d$ providers and the newcomer. %We use the terms $P=Random (V_p, d)$ and $v_n = Random (V_n, 1)$ to denote a randomly selected provider set and a newcomer. %And RS (random selection) represents the random scheme to obtain $P$ and $v_n$.
%to pick $d$ nodes from $V_p$ and a node as newcomer from $V_n$ randomly. In other words,
    %Random selection is the way that most existing works about data regeneration to determine the newcomer and providers, covering both exact repair and functional repair. Under this scheme, the repairing time $t$ can be easily deteriorated by low-capacitated links.

%We aim to select $d$ providers and a newcomer to form a star-structured topology with minimum regenerating time. Our node selection mechanisms mainly consist of two parts:
% %$1)$ the provider set of size $d$ is given, we only need to select the newcomer; $2)$ the provider set and the newcomer are both uncertain.
%\begin{enumerate}
%\item Given $d$ providers, we only need to select the newcomer;
%\item We need to select the $d$ providers and the newcomer.
%\end{enumerate}
 %In bandwidth heterogenous systems.

%In distributed storage systems, it is often the case that a few storage nodes manage and control a cluster of nodes. For example, each GFS cluster consists of a master node to hold all file system metadata, maintaining property information about all chunks in the cluster \cite{ghemawat2003google}. HDFS employs NameNode to trace attributes of DataNodes, recording their status information, modifying or accessing events \cite{shvachko2010hadoop}.
In real distributed storage systems, storage nodes report their status information to the master node at a certain interval  \cite{ghemawat2003google, shvachko2010hadoop}. Similarly, we employ a master node to manage the node selection process, as shown in Fig.~\ref{Fig:6}. The end-to-end network bandwidth measurement is a challenging task, yet it can be settled through probing based techniques at a low time cost \cite{meas1jsac2003evaluation}. The master node obtains the available bandwidths between any two nodes in the overlay network $G(V, E)$ periodically, as other status information. Then the realtime end-to-end bandwidth information of all node pairs will be ready before node selection is needed. Details in available bandwidth measurement are treated as a blackbox, free to vary and not a primary concern of this work. %\cite{meas1jsac2003evaluation, meas2003new}.
%Before the description of our node selection schemes, we summarize in Table  the variables appearing in the above formulation and the following paper.
To facilitate further discussions, we summarize important notations in the paper for ease of reference in Table \ref{tab:1}.

\begin{table}[t]
\centering
\caption{Table of Nomenclature} \label{tab:1}
\begin{tabular}{c p{6cm}}
%\begin{tabular}{l}
\hline
Notations & Definition\\
\hline
%$node$ & Severs in the distributed storage system  \\
%$storage~node$ & Servers holding coded blocks of the original file  \\
$G(V,E)$ & The distributed storage system topology\\
$G(V',E')$ & The repairing topology\\
$M$ &   Size of the original file\\
$N$ & Total number of storage nodes in the system\\
$n$ & Number of storage nodes holding coded file blocks\\
$d$ & Number of providers  \\
$\alpha$ & Number of coded blocks on each of the $n$ storage nodes \\
$\beta$ & Number of coded blocks generated by each provider \\
$v_f$ & The failed storage node\\
$V_p$ & Providers candidates, i.e., survival nodes \\
$V_n$ &  Newcomer candidates, i.e., nodes in $V$ excluding $\{v_f\} \cup V_p$\\
$\mathcal{B}$ & Sorted bandwidth sequence of links between $V_p$ and $V_n$\\
$v_n$ &  The selected newcomer \\
$\mathcal P(v_n)$ & The selected $d$ providers when the newcomer is $v_n$ \\
%$\mathbf{B}$ & The sorted links from $V_p$ to $V_n$ \\
$[P(v_n), v_n]$ & The repairing topology with $v_n$ and $P(v_n)$ to be the newcomer and providers respectively \\
$\mathbf{B}(\mathcal{P}_i(j),j)$ & Available bandwidth of links between provider $\mathcal{P}_i(j)$ and newcomer $j$\\
\hline
\end{tabular}
\end{table}

%\noalign{\smallskip}\hline
%\end{tabular}
%\end{table}

\section{problem formulation}\label{sec: Network Model}
%Specially, the substitute node is called \emph{newcomer} in regenerating codes and we will take this expression in the following paper.
In this section, we first describe the data regeneration process and analyze previous node determination schemes. Then we formulate the problem of node selection for data regeneration. %The definition of variables used in the schemes are given together in Table~\ref{tab:1}.

Given a regenerating coded storage system $(N,n,k,d,\alpha)$, %where $N$ is the total number of nodes in the system and $n$ is the number of nodes holding $\alpha$ coded data blocks concerning a specific file. %Specifically, we call the $n$ nodes holding coded blocks {\em storage nodes}.
%The original file size is $M$. It is divided into $k$ blocks, which are used to encode the $n\times \alpha$ blocks. %The MDS property requires that the original file should be reconstructed by accessing any $k$ out of the $n$ storage nodes \cite{hu2012nccloud}.
when a storage node fails, the $d$ out of $n-1$ storage nodes can be accessed to recover the lost data blocks at a newcomer. The master node is treated as a control server, other than a storage node. We call the failed storage node as $v_f$, and the selected newcomer as $v_n$. For regenerating codes with uniform end-to-end traffic, each provider needs to transmit $\beta$ blocks to the newcomer \cite{dimakis2010network}. %Thus the network traffic induced by data regeneration is $d\beta$.
To maintain the same level of redundancy, we assume the newcomer $v_n$ %cannot be one of the $n-1$ storage nodes. It
can only be selected from the other $N-n$ nodes in the system. %Recovery of data blocks is realized by accessing $d$ providers and transmitting coded data blocks to the newcomer.
The repairing topology consists of $d$ providers, a newcomer and links from the providers to the newcomer. %Most previous data recover models did not provide any special scheme to select the newcomer or the providers set. It can be regarded that the newcomer and providers in the previous designs are selected randomly.
 In some regenerating models, the failed device is mended or replaced with a new one and the newcomer $v_n$ is assumed to be identical to $v_f$ by default. As the failure of storage nodes occurs randomly, we regard this case equivalent to that the newcomer is randomly selected.% in the evaluation.

In this paper, we consider the situation of single node failure, and utilize the conventional star-structured repairing topology where each provider transmits coded data blocks directly to the newcomer. %In the star structure, the newcomer is the center and providers are the stations transmitting data to the center.
Our aim is to find a newcomer and $d$ providers to minimize the regeneration time. %Edges of the topology are only the links between the $d$ providers and the newcomer.
We use a complete graph $G(V, E)$ \cite{li2010tree, yan2014infocom} to represent the overlay network, where $V$ is the set of storage nodes in the distributed storage system and $E$ is the set of communication links between storage nodes. $G(V', E')$ is a subnetwork of $G(V, E)$, constituted by the selected $d$ providers and the newcomer. Let $V_n$ represent the set of newcomer candidates, $V_p$ represent the set of providers candidates, and $P$ represents the $d$ selected providers. $G(V', E')$ can also be expressed as a tuple $[P(v_n), v_n]$ to indicate the providers and newcomer of a repairing topology. To guarantee that a newcomer and $d$ providers can be found, we assume that $|V_n| > 0, |V_p| \geq d$. To provide the same level of redundancy, the $n-1$ storage nodes will not be considered to be a newcomer in the repairing process. Thus $V_n \cap V_p = \varnothing$ and $V_n \cup V_p = V - \{v_f\}$. Since there are totally $N$ storage nodes in the cluster, $|V_n|=N-n$ and $|V_p|=n-1$.

For any two nodes $u, v$ in $G(V', E')$, let ${c(u, v)}$ denote the link capacity of $(u, v)$ and $f(u, v)$ denote the number of data blocks transmitted on $(u, v)$. Link capacity means the available bandwidth between any two storage nodes for data regeneration. Without loss of generality, the encoding time on providers and the decoding time on the newcomer can be ignored, since these two operations can be pipelined with the data transmission \cite{li2010tree, yan2014infocom}. Given that data blocks are transmitted simultaneously from the $d$ providers to the newcomer, we may compute the regeneration time as
\begin{equation}\label{gqy_e1}
t =  max\{\frac{f(u, v)}{c(u, v)} \ |\ (u,v)\in E'\}.
\end{equation}

According to Equation (\ref{gqy_e1}), if each provider transmits $\beta$ blocks to the newcomer, the regeneration time is determined by the minimum link capacity between the $d$ providers and the newcomer, {\em {i.e.,}} the bottleneck bandwidth of the repairing topology. In distributed storage systems, link capacities between any two storage nodes vary over a wide range because of the hierarchical structure, oversubscription, and real-time service. We assume link capacities in the system obey a uniform distribution on an interval such as $[10, 120]$Mbps \cite{lee2005measuring}. Links in $G(V', E')$ are determined by the $d$ providers and the newcomer. %We use the terms $P=Random (V_p, d)$ and $v_n = Random (V_n, 1)$ to denote a randomly selected provider set and a newcomer. %And RS (random selection) represents the random scheme to obtain $P$ and $v_n$.
%to pick $d$ nodes from $V_p$ and a node as newcomer from $V_n$ randomly. In other words,
    %Random selection is the way that most existing works about data regeneration to determine the newcomer and providers, covering both exact repair and functional repair. Under this scheme, the repairing time $t$ can be easily deteriorated by low-capacitated links.

%We aim to select $d$ providers and a newcomer to form a star-structured topology with minimum regenerating time. Our node selection mechanisms mainly consist of two parts:
% %$1)$ the provider set of size $d$ is given, we only need to select the newcomer; $2)$ the provider set and the newcomer are both uncertain.
%\begin{enumerate}
%\item Given $d$ providers, we only need to select the newcomer;
%\item We need to select the $d$ providers and the newcomer.
%\end{enumerate}
 %In bandwidth heterogenous systems.

%In distributed storage systems, it is often the case that a few storage nodes manage and control a cluster of nodes. For example, each GFS cluster consists of a master node to hold all file system metadata, maintaining property information about all chunks in the cluster \cite{ghemawat2003google}. HDFS employs NameNode to trace attributes of DataNodes, recording their status information, modifying or accessing events \cite{shvachko2010hadoop}.
In real distributed storage systems, storage nodes report their status information to the master node at a certain interval  \cite{ghemawat2003google, shvachko2010hadoop}. Similarly, we employ a master node to manage the node selection process, as shown in Fig.~\ref{Fig:6}. The end-to-end network bandwidth measurement is a challenging task, yet it can be settled through probing based techniques at a low time cost \cite{meas1jsac2003evaluation}. The master node obtains the available bandwidths between any two nodes in the overlay network $G(V, E)$ periodically, as other status information. Then the realtime end-to-end bandwidth information of all node pairs will be ready before node selection is needed. Details in available bandwidth measurement are treated as a blackbox, free to vary and not a primary concern of this work. %\cite{meas1jsac2003evaluation, meas2003new}.
%Before the description of our node selection schemes, we summarize in Table  the variables appearing in the above formulation and the following paper.
To facilitate further discussions, we summarize important notations in the paper for ease of reference in Table \ref{tab:1}.
%\begin{table}
%\renewcommand\arraystretch{1.3}
%\centering
%\caption{Variable Description}
%\label{tab:1}       % Give a unique label
%\begin{tabular}{ll}
%\hline\noalign{\smallskip}
%\textbf{Variable} & \textbf{Description}  \\
%\noalign{\smallskip}\hline\noalign{\smallskip}

\begin{table}[t]
\centering
\caption{Table of Nomenclature} \label{tab:1}
\begin{tabular}{c p{6cm}}
%\begin{tabular}{l}
\hline
Notations & Definition\\
\hline
%$node$ & Severs in the distributed storage system  \\
%$storage~node$ & Servers holding coded blocks of the original file  \\
$G(V,E)$ & The distributed storage system topology\\
$G(V',E')$ & The repairing topology\\
$M$ &   Size of the original file\\
$N$ & Total number of storage nodes in the system\\
$n$ & Number of storage nodes holding coded file blocks\\
$d$ & Number of providers  \\
$\alpha$ & Number of coded blocks on each of the $n$ storage nodes \\
$\beta$ & Number of coded blocks generated by each provider \\
$v_f$ & The failed storage node\\
$V_p$ & Providers candidates, i.e., survival nodes \\
$V_n$ &  Newcomer candidates, i.e., nodes in $V$ excluding $\{v_f\} \cup V_p$\\
$\mathcal{B}$ & Sorted bandwidth sequence of links between $V_p$ and $V_n$\\
$v_n$ &  The selected newcomer \\
$\mathcal P(v_n)$ & The selected $d$ providers when the newcomer is $v_n$ \\
%$\mathbf{B}$ & The sorted links from $V_p$ to $V_n$ \\
$[P(v_n), v_n]$ & The repairing topology with $v_n$ and $P(v_n)$ to be the newcomer and providers respectively \\
$\mathbf{B}(\mathcal{P}_i(j),j)$ & Available bandwidth of links between provider $\mathcal{P}_i(j)$ and newcomer $j$\\
\hline
\end{tabular}
\end{table}

\section{Node Selection Scheme with uniform End-to-End Traffic} \label{sec: Algorithm}

 %Under uniform end-to-end traffic,
%\begin{enumerate}
%\item With $d$ providers predetermined randomly, the newcomer is not given;
%\item Both the newcomer and the providers are not given.
%\end{enumerate}
% %In bandwidth heterogenous systems.
%
%There are implementations of regenerating codes with $d=n-1$. Providers are all the survival nodes and need no selection. But without loss of generality, we expand the scenario into scheme $1)$, where the $d$ providers are predetermined randomly and we need to select a newcomer.
Uniform end-to-end traffic means that all providers transmit the same amount of data blocks ({\em i.e., }$\beta$) to the newcomer. Participating nodes are selected to complete the repairing topology $G(V',E')$.%When $d$ providers are given randomly in advance, they are viewed as fixed for newcomer candidates in $V_n$. So, the above two cases are FPSN (fix $d$ providers and select a newcomer), and

In this section, we present the node selection scheme called SPSN (Select Providers and Select Newcomer). The common design is also considered, where both the $d$ providers and the newcomer are determined randomly, called RS (random selection). We compare the regeneration time of these two schemes, to show the difference between random node selection, and optimal selection. Available bandwidth of a link in overlay networks is the bandwidth that can be reserved for data regeneration. We also prove that by applying node selection algorithm, we can obtain a star-structured topology with optimal regeneration time.

\subsection{Select Providers and Select Newcomer}

Assuming both the $d$ providers and the newcomer are not given in advance, we can optimally select $d$ providers from $V_p$ and a newcomer from $V_n$. The amount of data blocks transmitted from each provider to the newcomer is the same. Regeneration time $t$ of the resulting star-structured topology is determined by the bottleneck link capacities. %Because we need to construct a star-structured repairing topology and the newcomer can only be selected from $V_n$ to maintain the redundancy, edges between nodes of $V_p$ or $V_n$ are omitted.
Storage nodes in $V_p$, $V_n$, and the $(n-1)(N-n)$ links between them construct a bipartite graph.

%After obtaining a queue of links sorted by their available bandwidth, to select the providers and the newcomer, we consider the current largest capacitated link in the queue at each iteration. The construction of the repairing topology is reflected by the indegree of nodes in $V_n$. For the link of the current largest available bandwidth, we put it into the repairing topology, together with its two endpoints. As soon as a node in $V_n$ is put into the topology, its indegree increases by 1. The node $v_n$, whose indegree firstly achieves $d$, turns to be the newcomer we seek for; and the $d$ storage nodes of $V_p$ connecting the node $v_n$ are providers. These $d+1$ nodes form the repairing topology $G(V',E')$.

%The difference is that providers here are uncertain as well. We need to add two end nodes of the selected link into the topology at each iteration. %The algorithm ends when there is a node $v_n$ from $V_n$ connecting to any $d$ provider candidates from $V_p$, forming the repairing topology. Then we obtain $v_n$ as the newcomer and its $d$ adjacent nodes as providers set $P$. Other nodes added during the construction process will be dropped.

The SPSN algorithm is shown in Alg.~\ref{Alg2}. At the beginning, links between providers and newcomer candidates are sorted according to their available bandwidths on the master node, forming the queue $\mathcal{B}$. To select the providers and the newcomer, we consider the largest capacitated link in the queue at each loop. The construction of the repairing topology is reflected by the in-degree of nodes in $V_n$. For each link $e(i,j)$, we increase the in-degree of the $e$'s endpoint $j$ (in $V_n$) by $1$; and put the endpoint $i$ (in $V_p$) into the corresponding providers set of node $j$, {\em i.e.,} $\mathcal{P}(j)$. In the meanwhile, check whether the in-degree of node $j$ achieves $d$. If so, $j$ becomes the newcomer we select from $V_n$, and $\mathcal{P}(j)$ are the providers selected from $V_p$; together with the links between them form the repairing topology $G(V',E')$. Otherwise, we need to keep on visiting links in queue $\mathcal{B}$.
%The algorithm is inspired by traversal algorithm \cite{amanatides1987fast}.

\begin{algorithm}[htb]
\caption{Select Providers and Select Newcomer (SPSN) Algorithm}
\begin{algorithmic}[1]\label{Alg2}
%\STATE $/\ast$ $V_p$: the candidate provider nodes set; $V_n$: the candidate newcomer nodes set.; $v_f$ is the failed node. $V_p \cup V_n = V \backslash \{v_f\}, V_p \cap V_n = \varnothing, |V_p| \geq d, |V_n| > 0$ $\ast/$
\STATE $d^-(j) \leftarrow 0$  //in-degree of node $j$ in $V_n$
\STATE $\mathcal{P}(j) \leftarrow \varnothing, \forall j \in V_n$ ~//corresponding providers when node $j$ is selected as a newcomer
\STATE $\mathcal{B}\leftarrow \textbf{Sort}(\{\mathbf{B}(i, j)|i\in V_p, j \in V_n\}, \textbf{desc})$
%\STATE $\mathcal{B}\leftarrow \textbf{Heap}(\{\mathbf{B}(i, j)|i\in V_p, j \in V_n\}, \textbf{max})$ $/\ast$ Descending sort all links from $V_p$ to $V_n$ according to link bandwid, $\mathcal{B}$ is the sorted links result $\ast/$
\FOR{each $ t \in [1, |\mathcal{B}|]$}
\STATE $e(i,j) \leftarrow \mathcal{B}(t)$ ~//the current largest capacitated link
\STATE $d^-(j) \leftarrow d^-(j) + 1$ ~//increase the in-degree of the link's endpoint in $V_n$ by $1$
\STATE $\mathcal{P}(j) \leftarrow \mathcal{P}(j) \cup \{i\}$
%\IF{$\displaystyle \max_{\forall i \in \mathcal{V}(t)} \{d^-(i)\} \geq d$}
\IF{$d^-(j) \geq d$}
\STATE $v_n \leftarrow j$
%\STATE $v_n \leftarrow \displaystyle \arg \max_{\forall i \in \mathcal{V}(t)} \{d^-(i)\} $
\RETURN $\mathcal{P}(v_n), \ v_n$
\ENDIF
\ENDFOR
\end{algorithmic}
\end{algorithm}

In the algorithm, we need to sort the $(n-1)(N-n)$ links between $V_p$ and $V_n$ in descending order. For a sort program, efficient polynomial-time algorithm exists, with the average complexity {\em $O(n\lg n)$} by {\em {quick sort}}. %To construct the repairing topology, we read the current largest capacitated link. %The topology may conclude more than $d$ providers and one newcomer during the process of selection.
The algorithm is greedy in nature, and will not end until the degree of a newcomer candidate equals to $d$. %Until a node in $V_n$ connects $d$ nodes in $V_p$a2013solution.
All $(n-1)(N-n)$ edges should be read sequentially in the worst case. Thus, the algorithm runs in polynomial time {\em $O(n\lg n)$}. Next, we prove that the SPSN algorithm gives an optimal selection of the providers and newcomer.

\begin{theorem}
\label{theorem:1}
The regeneration time of the repairing topology with providers and the newcomer selected by SPSN is minimized.
\end{theorem}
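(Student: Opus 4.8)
The plan is to reduce the statement to a pure max--min bottleneck problem and then verify that the greedy SPSN procedure solves it optimally. First I would use Equation~(\ref{gqy_e1}) together with the uniform-traffic assumption $f(u,v)=\beta$ on all $d$ selected links to rewrite the regeneration time of any candidate star topology $[\mathcal{P}(v_n),v_n]$ as $t = \max\{\beta/c(u,v)\} = \beta/\min_{(u,v)\in E'} c(u,v)$. Since $\beta$ is fixed, minimizing $t$ is \emph{equivalent} to maximizing the bottleneck bandwidth $\min_{(u,v)\in E'} c(u,v)$ over all choices of a newcomer $v_n\in V_n$ and $d$ of its incident links. Let $B_{\mathrm{OPT}}$ denote this maximal achievable bottleneck; it then suffices to show that SPSN returns a topology whose bottleneck equals $B_{\mathrm{OPT}}$.

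Next I would characterize the output of SPSN. Let $e^*=(i^*,j^*)$ be the link $\mathcal{B}(t)$ examined at the iteration in which the algorithm returns, and set $b^*=\mathbf{B}(i^*,j^*)$. Because $\mathcal{B}$ is sorted in non-increasing order of bandwidth and links are consumed in that order, every one of the $d$ links recorded in $\mathcal{P}(j^*)$ at termination has bandwidth at least $b^*$, while the link $e^*$ that pushes $d^-(j^*)$ up to $d$ has bandwidth exactly $b^*$. Hence the star topology $[\mathcal{P}(v_n),v_n]$ returned by SPSN is feasible and has bottleneck exactly $b^*$, which already gives $B_{\mathrm{OPT}}\ge b^*$. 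Termination itself is guaranteed because $G$ is complete, so each $j\in V_n$ is incident to all $|V_p|=n-1\ge d$ providers and some in-degree must eventually reach $d$.

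The core of the argument is the reverse inequality, which I would prove by a greedy exchange. Suppose, for contradiction, that some feasible selection achieves a strictly larger bottleneck $b'>b^*$, say a newcomer $j'$ together with $d$ providers all joined to $j'$ by links of bandwidth $\ge b'>b^*$. Each of these $d$ links strictly precedes $e^*$ in the sorted queue $\mathcal{B}$, so by the iteration that processes $e^*$ the algorithm has already visited all of them and thus driven $d^-(j')$ to $d$ at an \emph{earlier} iteration, forcing an earlier return. This contradicts the assumption that SPSN returns at $e^*$. Therefore no feasible topology has bottleneck exceeding $b^*$, i.e.\ $B_{\mathrm{OPT}}\le b^*$, and combined with the previous paragraph $B_{\mathrm{OPT}}=b^*$.

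I expect the delicate point to be the handling of ties and the word ``strictly.'' The contradiction only needs $b'>b^*$, so ties at the bottleneck value $b^*$ cause no trouble: they merely correspond to alternative optima that also attain $B_{\mathrm{OPT}}=b^*$. Making the ordering argument precise --- that all links of bandwidth strictly greater than $b^*$ are processed before $e^*$ regardless of how equal-bandwidth links are broken --- is the step I would write out most carefully. Once the bottleneck equality $B_{\mathrm{OPT}}=b^*$ is established, the reduction from the first paragraph immediately yields that the SPSN regeneration time $\beta/b^*$ is minimal, which is exactly the claim.
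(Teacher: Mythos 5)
Your proof is correct and follows essentially the same route as the paper's: both exploit the fact that SPSN scans links in non-increasing bandwidth order and returns the first newcomer whose in-degree reaches $d$, so any topology with a strictly larger bottleneck would have forced an earlier return, a contradiction. Your write-up is in fact slightly tighter than the paper's, since you argue against an arbitrary alternative selection (rather than only those enumerated by a relaxed run of the algorithm) and you treat ties at the bottleneck value explicitly.
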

\begin{proof}
We prove this theorem by way of contradiction. Assume that the providers and newcomer selected through SPSN is $[\mathcal{P}(v_n), \ v_n]$, and there exists another star-structured repairing topology $[\mathcal{P}(v_n'), \ v_n']$, which has a minor regeneration time than $[\mathcal{P}(v_n), \ v_n]$.

Alg.~\ref{Alg2} will stop as soon as it finds $[\mathcal{P}(v_n), \ v_n]$. Here we relax this condition and allow Alg.~\ref{Alg2} to go through all the links between $V_p$ and $V_n$ (equally means the parameter $t$ going though values in $(1, |\mathcal{B}|)$), and return every eligible result. Under this condition, both $[\mathcal{P}(v_n), \ v_n]$ and $[\mathcal{P}(v_n'), \ v_n']$ would be two elements of the result set.
%Because in the worst case, Alg.~\ref{Alg2} will goes through all the links from $V_p$ to $V_n$, both $(\mathcal{P}(v_n), \ v_n)$ and $(\mathcal{P}(v_n'), \ v_n')$ can be constructed if we go through $[1, |\mathcal{B}|]$.
As the results are generated one by one, we can define that $[\mathcal{P}(v_n), \ v_n]$ is obtained in step $s_1$ of the loop in Alg.~\ref{Alg2}, and $[\mathcal{P}(v_n'), \ v_n']$ in step $s_2$. From our assumption, $[\mathcal{P}(v_n), \ v_n]$ is firstly generated. So we understand $s_1 < s_2$.

According to Alg.~\ref{Alg2}, the last link added to the star-structured repairing topology determines the bottleneck bandwidth, and then the regeneration time. So $\mathcal{B}(s_1)$ and $\mathcal{B}(s_2)$ are the bottleneck bandwidth links for the topology $[\mathcal{P}(v_n), \ v_n]$ and $[\mathcal{P}(v_n'), \ v_n']$, respectively. Since links are sorted in descending order, we have $\mathcal{B}(s_1) > \mathcal{B}(s_2)$, which is contradictory to the original assumption.
\end{proof}
\subsection{Discussion about the algorithms}

\subsubsection{Coding patterns}As we have mentioned, regenerating codes can be realized as functional or exact repair, while functional repair is mainly built on random linear network codes. Coding patterns are not restricted when we refer to RS, FPSN, or SPSN. They only concern about how to select the storage nodes to form a repairing topology, without any change of the coded data blocks. Our goal is to minimize the time consumed to complete data regeneration, through previous selection of the participating nodes. Node selection itself is coding-independent. It applies to both functional repair and exact repair codes.

\subsubsection{Potential overhead}Our node selection algorithms are based on the available bandwidths of links between two storage nodes. Extra overhead compared with random selection is mainly introduced by available bandwidth estimation and computation. In distributed storage systems, chunk servers send their state information to the master node periodically. Available bandwidth measurement has received a great deal of attention for decades. Hu {\em{et al.}} proposed two available bandwidth techniques consuming only 1-2 seconds \cite{meas1jsac2003evaluation}; and Man {\em et al.} limited the overhead of the measurement to 2-4 RTT \cite{meas2003new}. The time overhead of obtaining the available bandwidth is minor compared to the regeneration time. A typical cluster in GFS contains about a thousand storage nodes \cite{ghemawat2003google}. Considering the state of the art computing power of CPU, the delay incurred by sorting links by bandwidth of such cluster scale can be ignored \cite{computing}.

\section{Node Selection Schemes with flexible end-to-end traffic} \label{sec: star-flexible}
In previous sections, we assumed each provider transmits the same amount of coded data blocks and designed SPSN algorithm. Uniform end-to-end traffic is a common assumption of regenerating codes, no matter for functional repair or exact repair \cite{dimakis2010network, dimakis2011survey, li2010tree, gaston2013realistic}. Wang {\em et al.} \cite{yan2014infocom} dropped this premise and proposed FTR (flexible tree-structured regeneration), allowing each provider to flexibly generate different amount of data blocks according to their available bandwidths for a given topology. %They aim to minimize the regeneration time based on the tree-structured topology and have proven that FTR works effectively.

Flexible amount of data blocks transmitted from each provider, expressed as $\beta^*$, is another crucial method to reduce the regeneration time in bandwidth heterogeneous networks. %Wang {\em et al.} applied this idea to a well constructed regeneration tree and obtained reduced regeneration time. The star-structured topology is our focus in this paper.
For a repairing topology obtained with random node selection, if we allow providers to transmit flexible amount of data blocks to the newcomer, it turns to be the approach FRS (random selection with flexible $\beta^*$). To further reduce the regeneration time, we extend the node selection schemes with the consideration of flexible number of data blocks transmitted from each provider, called FLEX (select $d$ providers and a newcomer with flexible $\beta^*$). Recall that regeneration time is defined as $t=max\{\frac{f(u, v)}{{c(u, v)}}\}$ in Equation (\ref{gqy_e1}), where $(u,v)$ enumerates links in the overlay network $G(V',E')$, and $f(u, v)$ denotes the data blocks transmitted on the link. When the amount of data blocks generated by each provider is different, the value of $f(u, v)$ is various for each link. Let $\beta^*_i$ denote the amount of data blocks generated by provider $i$, and $c_i$ represent the capacity of the link between the newcomer and the provider $i$, $i\in \{1,\cdots,d\}$. We can immediately transform the regeneration time $t$ as
\begin{equation}\label{gqy_e2}
 t=max\{\frac{\beta^*_i}{c_i}\}.
\end{equation}

The flexible method can further reduce the regeneration time, because providers of high bandwidth generate more data blocks while providers of low bandwidth generate less. Therefore, in the repairing topology, links of high bandwidth can be fully exploited and negative effects from links of low bandwidth are mitigated. To the best of our knowledge, existing realization of exact repair codes cannot produce different amount of data blocks at different providers. Exact repair needs to determine the coding matrixes to obtain the $\beta$ blocks previously, unable to change dynamically according to instant available bandwidths. Flexible amount of data blocks can be encoded from the $\alpha$ blocks stored at each provider through random linear network code, realized by deliberately choosing the random linear coding matrix each time. As a result, FRS and FLEX can only be applied to data regeneration based on random linear network code.

\begin{lemma}
\label{theorem:2}
%Derived from the MDS property requirement,
Flexible regenerating codes show that, to minimize the regeneration time for a given repairing topology, the amount of blocks transmitted from provider $i$ is \cite{equation2014survey}:
\begin{displaymath}
\beta^*_i=\left\{
\begin{array}{ll}
\displaystyle \frac{c_i M}{k\sum_{r=1}^{d-k+1}c_r} & \textrm{if $ 1 \leq i \leq d-k+1 $} \\
\\
\displaystyle \frac{c_{d-k+1} M}{k\sum_{r=1}^{d-k+1}c_r} & \textrm{if $ d-k+1 <i \leq d  $}
\end{array} \right.
\end{displaymath}
\end{lemma}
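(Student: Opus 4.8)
The plan is to recognize the statement as the solution of a small min-max optimization and to prove it by pairing a matching lower bound with an explicit feasible allocation. First I would make the feasibility constraint explicit. Since each provider $i$ sends $\beta^*_i$ blocks over a link of capacity $c_i$, the regeneration time is $t=\max_i \beta^*_i/c_i$ as in Equation (\ref{gqy_e2}), and the only freedom is in choosing the nonnegative amounts $\beta^*_i$. For the functionally repaired node to preserve the MDS property, the cut-set bound of the regenerating-code framework must hold: when a later data collector contacts the newcomer together with $k-1$ other nodes, up to $k-1$ of the newcomer's providers may coincide with those nodes, so at least $d-k+1$ of the incoming repair flows must by themselves carry the full $\alpha=M/k$. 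Equivalently, every $(d-k+1)$-subset of $\{\beta^*_1,\dots,\beta^*_d\}$ must sum to at least $M/k$, the binding case being the $d-k+1$ smallest flows. I would state this as the single constraint of the optimization and relegate its information-theoretic justification to the cited bound.

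Next I would fix an ordering. Relabel the providers so that $c_1\le c_2\le\cdots\le c_d$; with this convention the $d-k+1$ smallest-capacity providers are exactly $1,\dots,d-k+1$, and I write $S=\sum_{r=1}^{d-k+1}c_r$. The lower bound is then immediate: for any feasible allocation, applying the constraint to the subset $\{1,\dots,d-k+1\}$ and using $\beta^*_i\le t\,c_i$ gives
\begin{displaymath}
\frac{M}{k}\le \sum_{i=1}^{d-k+1}\beta^*_i \le t\sum_{i=1}^{d-k+1}c_i = t\,S,
\end{displaymath}
hence $t\ge \frac{M}{kS}$. For achievability I would exhibit the stated allocation, which sets $\beta^*_i=\frac{c_iM}{kS}$ for $i\le d-k+1$ and $\beta^*_i=\frac{c_{d-k+1}M}{kS}$ for $i>d-k+1$, and verify two things: that the $d-k+1$ smallest flows are precisely $\beta^*_1,\dots,\beta^*_{d-k+1}$ and sum to exactly $M/k$ (so every $(d-k+1)$-subset meets the bound), and that the induced finishing times are $\beta^*_i/c_i=\frac{M}{kS}$ on the first group and $\frac{c_{d-k+1}}{c_i}\cdot\frac{M}{kS}\le\frac{M}{kS}$ on the second (because $c_i\ge c_{d-k+1}$ there). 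Thus this allocation attains $t=\frac{M}{kS}$, matching the lower bound, so the regeneration time is minimized.

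Finally I would explain why the formula is exactly this allocation and not merely some time-optimal one. Minimizing $t$ alone does not determine the $\beta^*_i$ uniquely---for instance maxing out every link at $\beta^*_i=t\,c_i$ is also time-optimal---so I would add the natural requirement that no provider transmits more than necessary and pin the allocation down by an exchange argument: the bottleneck group must be driven at the common rate $t^*=\frac{M}{kS}$, forcing $\beta^*_i=t^*c_i$ proportional to capacity there, while each remaining provider is lowered to the smallest value that keeps every $(d-k+1)$-subset feasible, namely $\beta^*_{d-k+1}=t^*c_{d-k+1}$; dropping below this would let that provider replace $\beta^*_{d-k+1}$ in the minimal subset and violate the $M/k$ bound. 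I expect the main obstacle to be this last step together with the justification of the constraint: one must argue carefully that the binding subset is indeed the $d-k+1$ smallest flows and that the capping of the high-capacity providers is both feasible and forced, rather than simply reading off a water-filling answer.
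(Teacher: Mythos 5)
Your argument is correct, but there is nothing in the paper to compare it against: the lemma is imported verbatim from the cited technical report on elastic regenerating codes, and the paper offers no proof of its own. Your self-contained derivation---casting the problem as minimizing $t=\max_i \beta^*_i/c_i$ subject to the cut-set constraint that every $(d-k+1)$-subset of the $\beta^*_i$ sums to at least $M/k$, extracting the lower bound $t\ge M/(kS)$ with $S=\sum_{r=1}^{d-k+1}c_r$ by applying that constraint to the $d-k+1$ smallest-capacity providers, and then checking that the stated allocation is feasible and attains the bound---is sound and is essentially the standard argument in the flexible-regeneration literature. Two remarks. First, the one load-bearing step you do not prove is the constraint itself: your single-cut justification is the right intuition for one repair at the minimum-storage point $\alpha=M/k$ (and note the lemma silently assumes that point), but a full functional-repair guarantee over arbitrary failure sequences requires the complete information-flow-graph analysis, so deferring it to the cited bound is legitimate and mirrors what the paper does. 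Second, your closing uniqueness discussion is not needed for the lemma as stated: the claim is only that this allocation minimizes the regeneration time, which your lower bound plus achievability already establishes; pinning the high-capacity providers at $c_{d-k+1}M/(kS)$ rather than $tc_i$ is about minimizing total repair traffic among time-optimal allocations, a property the lemma does not assert, so that paragraph is a useful aside rather than a required part of the proof.
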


where providers are labeled based on the capacity order of their links to the newcomer, as $c_1 \leqslant c_2 \leqslant \cdots \leqslant c_d$.

Alg. \ref{Alg:3} works in a way of traversal algorithm. At the beginning, we also sort the links $(i,j)$ between $V_p$ and $V_n$ according to their available bandwidths $\mathbf{B}(i, j)$, to obtain link sequence $\mathcal{B}$. For every newcomer candidate $j$ in $V_n$, we construct a repairing topology with its $d$ largest capacitated adjacent links $(\mathcal{P}(j),j)$, and endpoints $\mathcal{P}(j)$ in $V_p$. We apply Lemma \ref{theorem:2} to determine the amount of blocks generated by each provider, and compute the regeneration time of the topology by Equation (\ref{gqy_e2}). Finally we get the topology $[\mathcal{P}(v_n), v_n]$ with the minimum regeneration time. $\mathcal{P}(v_n)$ and $v_n$ become the selected providers and newcomer.

For a topology $[\mathcal{P}(j), j]$, the amount of data blocks is computed sequentially at each provider. To compute the amount of data blocks $\beta^*_i$ by Lemma~\ref{theorem:2}, we need to reverse the link sequence $\mathcal{B}(\mathcal{P}(j),j)$ and obtain a corresponding providers sequence $\mathcal{P}_i(j)$, such that the link capacity satisfies $\mathbf{B}(\mathcal{P}_1(j),j) \leqslant \mathbf{B}(\mathcal{P}_2(j),j) \leqslant \cdots \leqslant \mathbf{B}(\mathcal{P}_d(j),j)$. Transmission time on each link can be calculated as $\frac{\beta^*_i}{\mathbf{B}(\mathcal{P}_i(j),j)}$. Maximization of the expression will be the regeneration time of the repairing topology with node $j$ being the newcomer, according to Equation (\ref{gqy_e2}).

As $\mathbf{B}(\mathcal{P}_i(j),j)$ is sorted in ascending order, the $[1, d-k+1]$ part of the piecewise function will produce larger transmission time. Thus we obtain the regeneration time through the equation $t' \leftarrow \frac{M}{k\sum_{r=1}^{d-k+1}\mathbf{B}(\mathcal{P}_r(j), j)}$, in line 7 of Alg.~\ref{Alg:3}. After considering all provider candidates, the topology of minimum regeneration time will surely present.  %Then arrange the amount of data blocks needed to be generated at every provider according to the definition of $\beta^*$ in our equation. These two steps confirm to line $3-5$ in Alg. \ref{Alg:3}. With the amount of data blocks transmitted on each link $\beta^*_i$, we can easily obtain the data transmission time $t'$ of the link from provider $i$ by dividing the corresponding link capacity. After reduction of the fraction, the piecewise term of the equation in line $5$ can be eliminated and directly reach line $6$. We find the optimal set of providers and newcomer by traversing topologies resulted from all newcomer candidates in $V_n$.

In Alg.~\ref{Alg:3}, $(n-1)(N-n)$ links between $V_p$ and $V_n$ need to be sorted in descending order firstly. The repeated step in the algorithm is to construct a repairing topology for each newcomer candidate, and compute its regeneration time. The algorithm runs in polynomial time {\em $O(n\lg n)$}.

\begin{algorithm}[htb]
\caption{Select Providers and Newcomer with Flexible $\beta^*$ (FLEX) Algorithm}
\label{Alg:3}
\begin{algorithmic}[1]
%\STATE $/\ast$ $V_p \cup V_n = V \backslash \{v_f\}, V_p \cap V_n = \varnothing, |V_p| \geq d, |V_n| > 0$. $M$ is the file size$\ast/$
%\STATE $\mathbf{B}\leftarrow \textbf{Sort}(\{c(i, j)|i\in V_p, j \in V_n\}, \textbf{desc})$
\STATE $t \leftarrow +\infty, P  \leftarrow \varnothing$
\STATE $\mathcal{B}\leftarrow \textbf{Sort}(\{\mathbf{B}(i, j)|i\in V_p, j \in V_n\}, \textbf{desc})$
\FOR{each $j \in V_n$}
\STATE $\mathcal{P}(j) \leftarrow \arg \textbf{Top}(\{\mathcal{B}(i, j)|i\in V_p\}, \textbf{d})$
\STATE Rearrange$(\mathcal{P}_1(j), \dots, \mathcal{P}_i(j))$ such that $\mathbf{B}(\mathcal{P}_1(j),j) \leqslant \mathbf{B}(\mathcal{P}_2(j),j) \leqslant \cdots \leqslant \mathbf{B}(\mathcal{P}_d(j),j)$
\STATE \begin{displaymath}
\beta^*_i=\left\{
\begin{array}{ll}
\displaystyle \frac{\mathbf{B}(\mathcal{P}_i(j), j)M}{k\sum_{r=1}^{d-k+1}\mathbf{B}(\mathcal{P}_r(j), j)} & \textrm{if $ 1 \leq i \leq d-k+1 $} \\
\\
\displaystyle \frac{\mathbf{B}(\mathcal{P}_{d-k+1}(j), j)M}{k\sum_{r=1}^{d-k+1}\mathbf{B}(\mathcal{P}_r(j), j)} & \textrm{if $ d-k+1 < i \leq d  $}
\end{array} \right.
\end{displaymath}
%\STATE $t' \leftarrow \displaystyle \max_{\forall j \in [1,|\mathcal{P}|]}\{{\beta^*_j / \mathbf{B}(\mathcal{P}(j), v)}\}$
\STATE $t' \leftarrow \frac{M}{k\sum_{r=1}^{d-k+1}\mathbf{B}(\mathcal{P}_r(j), j)}$
\IF{$t > t'$}
\STATE $P \leftarrow \mathcal{P}(j), v_n \leftarrow j, t \leftarrow t'$
\ENDIF
\ENDFOR
\RETURN $P, v_n, \beta^*,t$

\end{algorithmic}
\end{algorithm}

\begin{theorem}
The algorithm FLEX is the optimal algorithm for $\beta$-flexible data regeneration.
\end{theorem}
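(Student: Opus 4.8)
The plan is to establish optimality in two stages: first fix the newcomer and argue that the provider choice in line~4 of Alg.~\ref{Alg:3} is optimal, then argue that the outer loop over newcomer candidates captures the global optimum. I would begin by invoking Lemma~\ref{theorem:2} to pin down the regeneration time of any single topology. For a fixed newcomer $j$ whose chosen provider-link capacities are sorted as $c_1 \leqslant c_2 \leqslant \cdots \leqslant c_d$, substituting the $\beta^*_i$ of Lemma~\ref{theorem:2} into Equation~(\ref{gqy_e2}) gives $\frac{\beta^*_i}{c_i} = \frac{M}{k\sum_{r=1}^{d-k+1}c_r}$ for $i \leqslant d-k+1$, while for $i > d-k+1$ the ratio equals $\frac{c_{d-k+1}}{c_i}$ times this value and hence is no larger, since $c_i \geqslant c_{d-k+1}$. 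Thus the maximum in Equation~(\ref{gqy_e2}) equals $t' = \frac{M}{k\sum_{r=1}^{d-k+1}c_r}$, exactly the quantity computed in line~7. This reduces the regeneration time of any topology to a function of the sum of the $d-k+1$ smallest provider-link capacities.

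The crux is then to show that, for a fixed newcomer $j$, minimizing $t'$ is equivalent to maximizing $\sum_{r=1}^{d-k+1}c_r$, and that this sum is maximized by the $d$ highest-capacity links incident to $j$, which is precisely the set $\mathcal{P}(j)$ selected in line~4. I would prove this by an exchange argument. Let $T$ denote the top-$d$ adjacent links and let $S$ be any other $d$-subset of $j$'s incident links. Since $S \neq T$, there is a link in $S \setminus T$ whose capacity is at most that of some link in $T \setminus S$; swapping the former for the latter produces a $d$-subset whose capacities, when sorted ascending, dominate those of $S$ position by position. Because raising one entry of a sorted list and re-sorting can only weakly increase every order statistic, this swap cannot decrease the sum of the smallest $d-k+1$ capacities. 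Iterating such swaps transforms $S$ into $T$ without ever decreasing $\sum_{r=1}^{d-k+1}c_r$, so $T$ attains the maximum, and therefore the minimum $t'$ for newcomer $j$.

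Finally I would close the argument over newcomers. Any feasible $\beta$-flexible repairing topology designates some node $v_n' \in V_n$ as the newcomer together with $d$ providers; by the two steps above, its regeneration time is at least the value $t'(v_n')$ that Alg.~\ref{Alg:3} computes when its loop reaches $v_n'$. Since FLEX evaluates $t'(j)$ for every $j \in V_n$ and returns the minimum, the topology it outputs has regeneration time no larger than that of any competing topology, establishing global optimality.

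The step I expect to be the main obstacle is the exchange argument in the second paragraph: one must verify rigorously that replacing a chosen link by a higher-capacity one never decreases the sum of the $d-k+1$ \emph{smallest} selected capacities. The subtlety is that the swap can change which links fall into the ``smallest $d-k+1$'' bracket, so the claim rests on the order-statistic monotonicity fact (increasing one entry of a sorted sequence weakly increases every order statistic) rather than on a naive term-by-term comparison. Everything else---the closed form for $t'$ and the reduction over newcomers---is routine once this combinatorial fact is in hand.
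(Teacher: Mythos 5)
Your proof is correct and follows essentially the same route as the paper's: reduce the regeneration time of any candidate topology to $\frac{M}{k\sum_{r=1}^{d-k+1}c_r}$ via Lemma~\ref{theorem:2}, show that the $d$ largest-capacity links incident to a fixed newcomer are optimal, and then take the minimum over all newcomer candidates exactly as the loop in Alg.~\ref{Alg:3} does. The only substantive difference is that you supply the order-statistic/exchange justification for why the top-$d$ selection maximizes the sum of the $d-k+1$ smallest chosen capacities, a step the paper's contradiction-style proof merely asserts, so your write-up is the same argument carried out with more rigor.
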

\begin{proof}
We give a proof by way of contradiction. We assume the result of Alg.\ref{Alg:3} is $(P, v_n, \beta^*,t)$ , and there is another result $(\widetilde{P}, \widetilde{v}_n, \widetilde{\beta}^*, \widetilde{t})$ where$(P, v_n)  \neq  (\widetilde{P}, \widetilde{v}_n)$ and $\widetilde{t} < t$.
We need to consider two cases: $v_n = \widetilde{v}_n$ and $v_n \neq \widetilde{v}_n$.

1) When $v_n = \widetilde{v}_n$, then $P \neq \widetilde{P}$. We can see that in Alg. \ref{Alg:3}, when node $v_n$ is the newcomer, the regeneration time: $t \leftarrow \frac{M}{k\sum_{r=1}^{d-k+1}\mathcal{B}(\mathcal{P}_r(v_n), v_n)}$. Providers in $\mathcal{P}({v_n)}$ are sorted in ascending order of link capacities. $\sum_{r=1}^{d-k+1}\mathcal{B}{(\mathcal{P}_r(v_n), v_n)}$ is the sum of the $d-k+1$ minimum link capacities. In Alg.~\ref{Alg:3}, $\mathcal{P}({v_n)}$ is the endpoints of the $d$ links of largest bandwidth connecting $v_n$. Therefore $t$ is the minimum regeneration time if $v_n = \widetilde{v}_n$, {\em i.e.}, $t < \widetilde{t}$.

%\begin{eqnarray*}
%t	&=& \displaystyle \max_{\forall j \in [1,|\mathcal{P}|]}\{{\beta^*_j / \mathbf{B}(\mathcal{P}(j), v_n)}\}\\
% &=& \max_{\forall j \in [1,|\mathcal{P}|]}
%\left\{
%\begin{array}{ll}
%\frac{M}{k\sum_{i=1}^{d-k+1}\mathbf{B}(\mathcal{P}(i), v_n)} & \textrm{if $ 1 \leq j \leq d-k+1 $} \\\\
%\frac{\mathbf{B}(\mathcal{P}(d-k+1), v_n)M}{k\mathbf{B}(\mathcal{P}(j), v_n)\sum_{i=1}^{d-k+1}\mathbf{B}(\mathcal{P}(i), v_n)} & \textrm{if $ d-k+1 < j \leq d  $}
%\end{array} \right.\\\\
%&=& \frac{M}{k\sum_{i=1}^{d-k+1}\mathbf{B}(\mathcal{P}(i), v_n)}
%\end{eqnarray*}

2) The case $v_n \neq \widetilde{v}_n$. When $\widetilde{v}_n$ turns to be the newcomer in Alg. \ref{Alg:3}, according to the same analysis as in 1), we have:
$t' = \frac{M}{k\sum_{r=1}^{d-k+1}\mathcal{B}(\mathcal{P}_r(\widetilde{v}_n)), \widetilde{v}_n)}$. Here $t'$ is the minimum repair time when $\widetilde{v}_n$ is the newcomer, thus $t' < \widetilde{t}$. However, $v_n$ is the optimal newcomer over all newcomer candidates in $V_n$, while $\widetilde{v}_n$ is just a one-shot optimal newcomer. So $t \leq t'  < \widetilde{t}$.

To sum up, we derive contradictions in both cases 1) and 2) to the assumption $t > \widetilde{t}$.
%\begin{eqnarray*}
%\min{t'}
%\Leftrightarrow \max{\sum_{i=1}^{d-k+1}\mathbf{B}(\mathcal{P}(i), v)}
%\end{eqnarray*}

\end{proof}

\section{Node selection scheme in real-world data center networks}\label{sec: fat-tree}

Data center networks are built and employed to provide a diverse set of services, such as Internet-facing applications and data intensive applications etc. In cloud data centers that host Web and storage services, distributed storage systems are widely deployed for better data management. Quite a few research conclusions on distributed storage system have been applied to data centers \cite{a2013solution, fan2009HDFSRAIDdiskreduce, sathiamoorthy2013xoring}. Today's data center networks are usually organized hierarchically, including a tree of routing and switching devices. The topologies of data center networks possess special features. This can not be overlooked at the design of distributed storage systems.

Fat-tree topology is proposed by Al-Fares \emph {et al.}, aiming to realize load balance through spreading traffic over more links \cite{fatreee2008al}. It has been widely adopted by large scale data center networks. For example, Cisco applies Fat-Tree topology for efficient communication \cite{cisco2007oversubscription, fatreee2008al, fat-tree2009}. However, even in data centers where all application services spread across all racks, data traffics at each level still present different characterization \cite{benson2010network, benson2010understanding}. Physical bandwidths of links at each level differ orders of magnitude intuitively, moving up the network hierarchy. In this section, we consider the representative real data center topologies and explore the effect of the tiered architecture to the regeneration time.

\subsection{Regeneration time in Fat-tree Architecture}

Fat-tree topology has three layers of switches, {\em i.e.,} core, aggregation, and edge layer. Its scale is usually measured by the parameter $K$, the number of folks of core switches. When each core switch has $K$ folks, there are $K*(K/2)$ aggregation switches, edge switches, and $K*(K/2)^2$ storage nodes. An example of the interconnect architecture of Fat-tree with $K=4$ is shown in Fig. \ref{Fig:5}. In a Fat-tree based storage system $G(V, E)$, $V$ refers to the storage nodes, excluding all kinds of switches. A {\em link} in the Fat-tree architecture means the physical link between any two connected devices. The physical link sequence connecting any two storage nodes is defined as a {\em path}. $E$ is the set of path in $G(V, E)$. For star-structured repairing topologies $G(V', E')$ in this hierarchical architecture, data flows from $d$ providers are transferred to the newcomer at the same time, through layers of switches according to inherent routing policies. We assume each provider transmits the same amount of data.

\begin{figure}[!htb]
\centering
\hspace*{-0.10in}
\includegraphics[width = 9.5cm]{./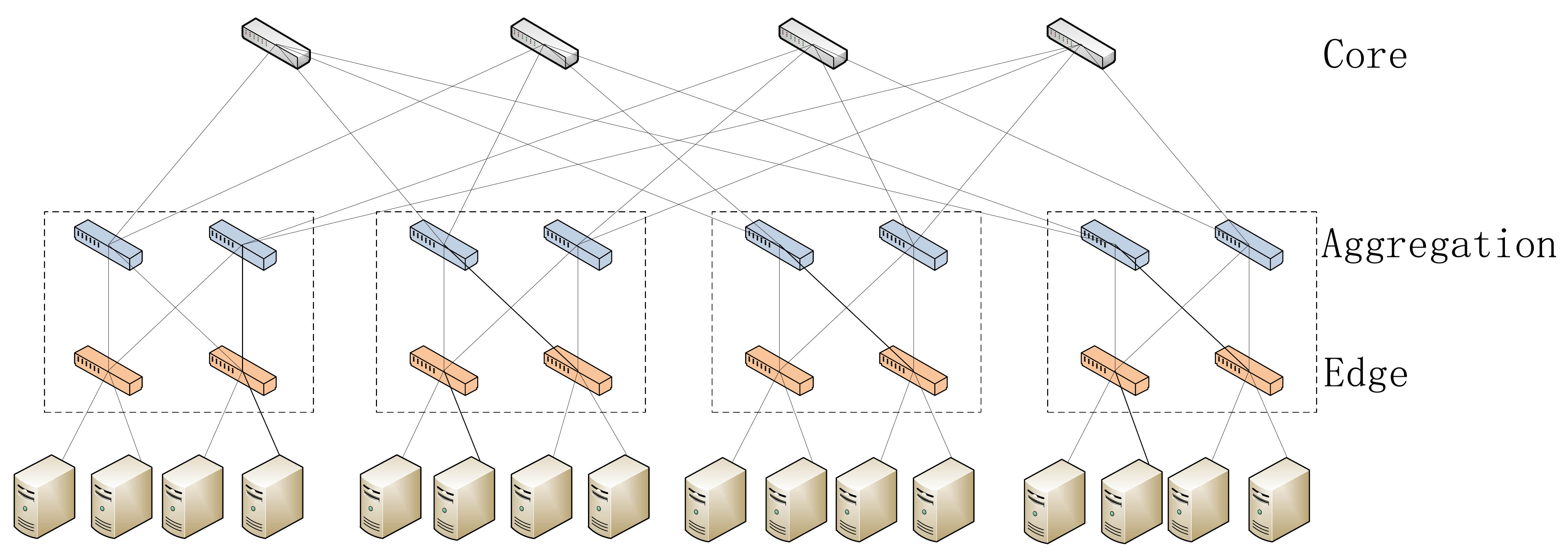}
\caption {The interconnect topology of Fat-tree with $K=4$. Storage servers are organized hierarchically with edge, aggregation, and core switches.}
\label{Fig:5}
\end{figure}

The regeneration time is determined by the available bandwidth of the paths in the repairing topology. However, links in Fat-tree architecture is prone to be shared by paths. Providers will not complete their transmission at the same time because of the heterogeneous available bandwidth of paths. Available bandwidth of the left providers will be affected if transmissions on other paths complete in advance. It's hard to derive the regeneration time of this complex topology.

For a repairing topology $G(V', E')$ in Fat-tree architecture, we use $l$ to denote physical links in each path, $c(l)$ to represent the available bandwidth of link $l$, and $q(l)$ signals the number of the path passing link $l$. Links of all paths in the repairing topology form a set $S$, while paths form the set $R$. When a physical link is occupied by more than one path, data flows will equally share the physical link bandwidth. The capacity of a path is decided by the minimal available bandwidth of links in the path. We compute the regeneration time $t$ of a given repairing topology through Alg.~\ref{Alg:4}.

\begin{algorithm}[htb]
\caption{Compute the regeneration time in Fat-tree architecture}
\begin{algorithmic}[1]\label{Alg:4}
%\STATE $/\ast$ $V_p$: the candidate provider nodes set; $V_n$: the candidate newcomer nodes set.; $v_f$ is the failed node. $V_p \cup V_n = V \backslash \{v_f\}, V_p \cap V_n = \varnothing, |V_p| \geq d, |V_n| > 0$ $\ast/$
\STATE $S \leftarrow$ all physical links in $G(V', E')$
\STATE $R \leftarrow$ $d$ paths in $G(V', E')$
\STATE $count[l]$  //the number of the path passing link $l$
%\STATE $w$  //
\STATE $t \leftarrow 0$,~$t' \leftarrow 0$,~$\beta_i \leftarrow \beta$
\FOR{$j=1 ~to~ d$}
\STATE $R' \leftarrow R$
\STATE $q(l) \leftarrow count[l]$
\WHILE{$R' \neq \phi$}
\STATE $w$ $\leftarrow$ min $\{{\frac{c(l)}{q(l)}}\}$, $\forall~ l \in S$
\STATE $l=$ arg $min\{{\frac{c(l)}{q(l)}}\}$
%\STATE $S \leftarrow S-\{l\}$
%\STATE $\forall ~path_i \in R'$
\FOR {each $path_i \in R'$}
\IF {$l \in path_i$}
\STATE $w_i \leftarrow w$, $R' \leftarrow R'-\{path_i\}$
%\STATE $\forall~ l' \in path_i$
\STATE $c(l') \leftarrow c(l')-w_i$, $q(l') \leftarrow q(l')-1$~~$\forall~ l' \in path_i$
\ENDIF
\ENDFOR
\ENDWHILE
%\STATE
\STATE $t'\leftarrow min ~\{\frac{\beta_i}{w_i}\}$,  $i \leftarrow$ arg $min ~\{\frac{\beta_i}{w_i}\}$ ~~$\forall ~path_i \in R$
\STATE $count[l] \leftarrow count[l]-1$ ~~$\forall ~ l \in path_i$
\STATE deleting $path_i$ from $R$
\STATE $\beta_i \leftarrow \beta_i - t'*w_i$
\STATE $t\leftarrow t+t'$
\ENDFOR
\RETURN $t$
%\STATE $\mathcal{P}(j) \leftarrow \varnothing, \forall j \in V_n$ ~//corresponding providers when node $j$ is selected as a newcomer
%\STATE $\mathcal{B}\leftarrow \textbf{Sort}(\{\mathbf{B}(i, j)|i\in V_p, j \in V_n\}, \textbf{desc})$
%\STATE $\mathcal{B}\leftarrow \textbf{Heap}(\{\mathbf{B}(i, j)|i\in V_p, j \in V_n\}, \textbf{max})$ $/\ast$ Descending sort all links from $V_p$ to $V_n$ according to link bandwid, $\mathcal{B}$ is the sorted links result $\ast/$
%\FOR{each $ t \in [1, |\mathcal{B}|]$}
%\STATE $e(i,j) \leftarrow \mathcal{B}(t)$ ~//the current largest capacitated link
%\STATE $d^-(j) \leftarrow d^-(j) + 1$ ~//increase the in-degree of the link's endpoint in $V_n$ by $1$
%\STATE $\mathcal{P}(j) \leftarrow \mathcal{P}(j) \cup \{i\}$
%%\IF{$\displaystyle \max_{\forall i \in \mathcal{V}(t)} \{d^-(i)\} \geq d$}
%\IF{$d^-(j) \geq d$}
%\STATE $v_n \leftarrow j$
%%\STATE $v_n \leftarrow \displaystyle \arg \max_{\forall i \in \mathcal{V}(t)} \{d^-(i)\} $
%\RETURN $\mathcal{P}(v_n), \ v_n$
%\ENDIF
%\ENDFOR
\end{algorithmic}
\end{algorithm}

In Alg. \ref{Alg:4}, the array $count[l]$ keeps the record the number of path passing the link $l$ and $R'$ to track the path set in the inner loop. Given $G(V', E')$, we first calculate the bandwidth share $w$ on each link, and obtain the link $l$ that offers the minimal bandwidth available to each path through it. If $l$ is a link of $path_i$, the available bandwidth $w_i$ of $path_i$ is then set to be $w$. For all other links $l'$ in $path_i$, we distract c(l') by $w$ and $q(l')$ by 1. Then we keep on checking the available bandwidth of all the $d$ paths based on the updated $R'$ and $q(l)$. As each provider transmits the same amount of data $\beta$, the path with the maximum available bandwidth will complete the transmission process firstly in the star-structured topology, which costs an interval $t'$. Then the path is dismissed from the path set $R$ and the capacity it occupied will be released. The left providers need to keep on transmitting their left amount of data. They may enjoy larger available bandwidth as some link capacities have been released. After all the $d$ providers complete data transmission, we obtain the regeneration time $t$.
% This operation corresponds to line 11-17 in Alg. \ref{Alg:4}.

\subsection{Select Providers and Select Newcomer in Fat-tree Architecture}

The description above presents the regeneration process intuitively. It can be seen that the regeneration time $t$ is affected by each participating node. Our aim is to select $d$ providers and a newcomer to obtain the optimal regeneration time. The optimal regeneration time can be attained through traversal. However, in Fat-tree architecture with the scale parameter $K$, the time cost to select the newcomer and $d$ providers can be calculated as $O(\dbinom{n}{d}K^3/4)$. Such time complexity is unacceptable for data regeneration. We propose the heuristic algorithm SPSN-F to selection the participating nodes. Correspondingly, the random selection of participating nodes in Fat-tree architecture is called RS-F.

In Fat-tree architecture, $d$ data flows from providers will reach the newcomer through the link between the newcomer and edge switch at the end of transmission paths. As the capacity share of links determines the available bandwidth of a path, it will be better to involve a newcomer with a larger capacitated link to the edge switch. For a repairing topology $(P, v_n)$, the regeneration time will be shorter if the bottleneck bandwidth of paths is larger, in coincidence with the overlay network, because both of them are many-to-one data transmission networks. Providers are selected by way of deletion. Details are shown in Alg. \ref{Alg5}.

\begin{algorithm}[htb]
\caption{Select Providers and Select Newcomer in Fat-tree Clusters (SPSN-F) Algorithm}
\begin{algorithmic}[1]\label{Alg5}
%\STATE $/\ast$ $V_p$: the candidate provider nodes set; $V_n$: the candidate newcomer nodes set.; $v_f$ is the failed node. $V_p \cup V_n = V \backslash \{v_f\}, V_p \cap V_n = \varnothing, |V_p| \geq d, |V_n| > 0$ $\ast/$
%\STATE $V_p$ // providers candidates
\STATE $R' \leftarrow$ paths between $V_p$ and $v_n$
%\STATE $ptah_j$ //the path between a provider candidate $j$ and $v_n$
\STATE $S' \leftarrow$ links of paths $\in R'$
\STATE $\eta_j=\overline{\eta_{j1}\eta_{j2}\ldots\eta_{ji}\ldots}$ // a binary number indicating whether a link $l_i$ is occupied by $path_j$
\STATE $\eta_{ji} \leftarrow 0$
\STATE $(u, v) \leftarrow$ the largest-capacitated link between edge switches and storage servers
\STATE $v_n \leftarrow~ v$
\STATE $P \leftarrow$ all provider candidates
\WHILE{$|P|>d$}
\STATE $w_i = \frac{c(l_i)}{q(l_i)}$~~$\forall l_i~\in~S'$
\STATE sort $(l_i)$ ~such that $w_1 \leq w_2 \leq \ldots$
\FOR{each $ path_j \in R'$}
\IF{$l_i \in path_j$}
\STATE $\eta_{ji} \leftarrow 1$
\ENDIF
\ENDFOR
\STATE $j \leftarrow$ the provider of the maximum value of $\eta_j$
\STATE deleting candidate $j$ from $P$
%\FOR{each $ l_i \in path_j$}
%\STATE $q(l_i) \leftarrow q(l_i)-1$,~
%\IF{$q(l_i)==0$}
%\STATE $S' \leftarrow S'- \{l_i\}$
%\ENDIF
%\ENDFOR
\STATE subtracting $q(l_i)$ by 1 for all links in $path_j$, deleting $l_i$ if $q(l_i)=0$
\STATE deleting $path_j$ from $R'$
\ENDWHILE
\RETURN $P, v_n$
\end{algorithmic}
\end{algorithm}

In Alg.~\ref{Alg5}, the newcomer is selected according to bandwidth between edge switches and all newcomer candidates. We consider the topology formed by all providers candidates, {\em i.e.,} $V_p$ and the newcomer $v_n$. A $path_j$ refers to the path from the provider candidate $j$ to $v_n$. To select $d$ providers, we calculate the available bandwidth can be distributed to a path on each link of the topology. A binary indicator variable $\eta$ is employed to measure the available link bandwidth of a path. Take the $i^{th}$ link in the ascending sequence $l_i$, if $l_i$ is in $path_j$, then the $i^{th}$ bit of $\eta_j$ is set to be $1$, {\em i.e.,} $\eta_{ji}=1$. Thus we obtain the value of $\eta_j$ for each provider candidate $j$. Delate the provider $j$ with the largest $\eta$ value, until there is $d$ providers in the set $P$, the algorithm ends.

The most time-consuming step of Alg.~\ref{Alg5} is the sorting operation in line 7-8. There are totally $K^3/4$ storage nodes in $G(V, E)$. And we need to check bandwidths of the links between $|V|- n$ storage nodes and their corresponding edge switches. All left steps are confined to $V_p$ and the selected newcomer, with a much smaller scale. Thus, the time complexity of this algorithm is $O(n\lg n)$.

 %We prove that Alg.~\ref{Alg5} can guarantee the optimal regeneration time for a large probability.

%\begin{theorem}
%The algorithm SPSN-F is the optimal algorithm for data regeneration in Fat-tree based clusters.
%\end{theorem}
%\begin{proof}
%
%
%\end{proof}
%\subsection{Discussion}

\section{Evaluation} \label{sec: evaluation}

We perform a series of trace driven simulations to verify that our node selection schemes can indeed reduce the regeneration time. We answer the following questions. (1) How much time can be saved through optimal node selection? (2) How the efficiency of node selecion schemes changes in different heterogeneous environments? (3) What is the impact of the regenerating codes on node selection schemes? (4) Node selection schemes and flexible ene-to-end traffics, which is more efficient in regeneration time reduction?

Simulations are conducted on the network simulator 2 (NS2) to evaluate the effectiveness of different newcomer and providers selection schemes \cite{benson2010understanding}. The NS2 simulator is deployed on Ubuntu 11.04 and complied by gcc-4.3 and g++-4.3. %A full-connected network topology comprised of $1000$ servers and $1000^2$ duplex links is employed
We employ an overlay network and a Fat-tree based topology respectively in the simulation experiments; and build up TCP connections between storage nodes to transmit the repairing traffic. The queue management mechanism Drop-tail is used to manage the queue at routers \cite{patil2011droptail}. Under this mechanism, each packet is treated identically and newly coming packets will be dropped if the buffer is full. The queue limit is set to 50. Propagation delay on all links is set to 0.1ms. Providers generate data blocks using FTP. For traditional regenerating codes requiring providers to transmit equal amount of data blocks to the newcomer, each provider generates $\beta=\frac{M}{d(d-k+1)}$ blocks \cite{dimakis2010network}. For schemes requiring flexible amount of data blocks, each provider generates $\beta^*$ data blocks according to equation (\ref{gqy_e2}). We use an $(n=14,k=8)$ MDS code to introduce data redundancy and set $d=10$. According to real world network measurement, link capacity can be set to obey a uniform distribution on the interval $U(10,120)$Mbps \cite{lee2005measuring}. The size of the original file is 100Mb. Each simulation result is obtained after running $100$ times repeatedly.

 %Performance is evaluated by the regeneration time of the resulting repairing topology.
The regeneration time is our main concern. %In our model, we define regeneration time as the maximum of data blocks transmitted from each provider to the newcomer divided by the available link capacity.
In special, the encoding time on the providers and the decoding time on the newcomer are ignored because these two operations can be done simultaneously with data transmission \cite{yan2014infocom, li2010tree}. For evaluations on NS2, the regeneration time is measured as the time span from the start and the end of the data regeneration process, {\em i.e.,} the time interval from the first data block transmitted by a provider, to the last one received by the newcomer. %It is also determined by bottleneck links of the repairing topology.%As we assume that the number of data blocks transmitted on each link is the same, the regeneration time $t$ is decided by the bottleneck bandwidth alone. Because the two parameters are proportional, we can use bottleneck bandwidth to explain the trend of the regeneration time.

Experiments are conducted in four phases. In the first part we examine the performance of the two node selection schemes RS and SPSN, showing the benefit of node selection. Different link capacity distributions are considered to test the impact of bandwidth heterogeneity on the schemes. In the second part, we test the effect of parameter settings such as the scale of the cluster and the coding patterns. The third part compares the regeneration time of FRS and FLEX, analyzing the efficiency of optimal node selection and flexible regeneration traffic approaches to reduce regeneration time. Finally, performances of FLEX and SPSN-F are shown.

\subsection{Benefit of node selection}

%To show the impact of bandwidth variance,
We simulate 7 different link capacity distributions: $U_1[0.3,120]$Mbps, $U_2[1,120]$Mbps, $U_3[10,120]$Mbps, $U_4[30,120]$Mbps, $U_5[50,120]$Mbps, $U_6[70,120]$Mbps, and $U_7[90,120]$Mbps. Experiment results show that the resulting topology with optimal node selection achieves much shorter regeneration time for the same original file. Fig.~\ref{Fig:1} shows the regeneration time of the repairing topology produced by RS and SPSN for a range of various available bandwidths.
\begin{figure}[!htb]
\centering
\hspace*{-0.15in}\includegraphics[width = 8cm]{./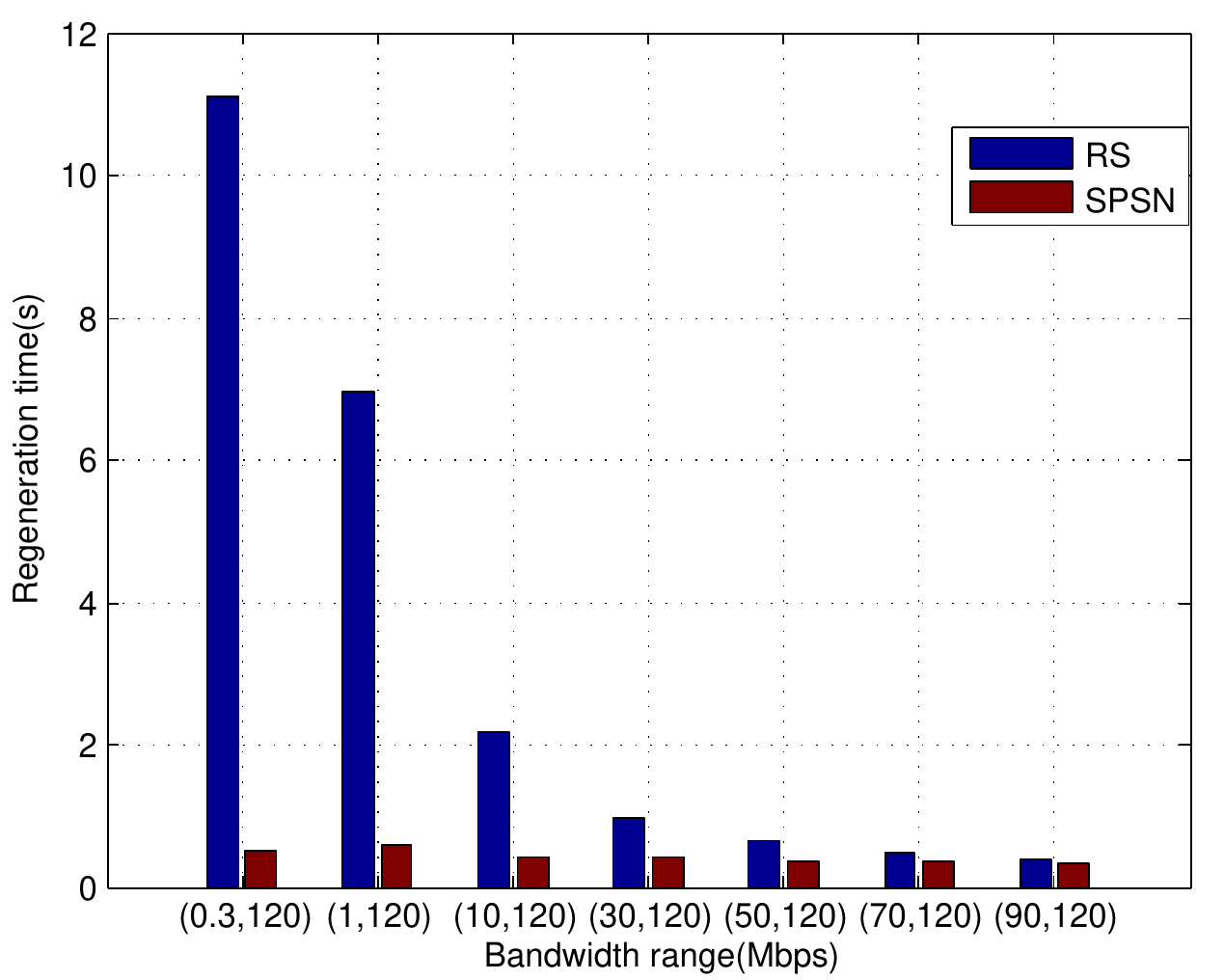}
\caption{Regeneration time of repairing topologies resulted from RS and SPSN respectively for different link capacity distributions. The parameters are N=1000, n=14, k=8, d=10. M=100Mb.}
\label{Fig:1}
\end{figure}
%~\\
%Random selection of not only the newcomer but providers can result in links of various capacities for data regeneration. Given the full connected graph $G(V,E)$, the probability that links of different capacities selected into the repairing topology is almost the same. For the scheme RS, transmission bottleneck is likely to be lower-capacitated links. When the newcomer and providers are allowed to be selected, we can avoid low-capacitated links to construct the repairing topology deliberately. Thus FPSN and SPSN obtain relatively higher bottleneck bandwidth compared with RS.
%%But as the $d$ providers are fixe, the performance of FPSN would be deteriorated by the scale of the cluster, although usually $N\gg n$.
%When the number of nodes in the cluster, {\em i.e. N} is limited and the providers are still fixe, the choice of the newcomer and links in the final topology will be narrowed. In SPSN, both the $d$ providers and the newcomer are determined by the sorted links. Thus the $d$ highest-capacitated links possible are appended to the repairing topology no matter how many storage nodes are in the cluster. SPSN will perform the best among the three schemes.
%\subsubsection{$P$ is determined previously}

For the two schemes, providers generate the same amount of data blocks. Regeneration time is actually determined by the bottleneck bandwidth of the repairing topology. %Random selection of not only the newcomer but providers may result in links of various capacities for data regeneration.
Given the overlay network $G(V,E)$, the probability that links of different capacities selected into $G(V',E')$ is almost the same for the scheme RS. Transmission bottleneck is likely to be lower capacitated links especially in bandwidth heterogeneous networks. When the newcomer and providers are allowed to be selected, we can avoid low capacitated links into the repairing topology deliberately. Thus SPSN can obtain relatively higher bottleneck bandwidth compared with RS. It can be seen from Fig.~\ref{Fig:1} that regeneration time reduces dramatically after node selection is introduced at each point of link capacity variance. The reduction becomes larger as the bandwidth variance turns larger.

RS performs much better as the link capacity variance turns weaker. On the other hand, %for FPSN, the 10 providers are supposed to be selected randomly from the 13 survival nodes to form $P$ in advance; and we need to select a newcomer from $V_n$, the remaining 986 newcomer candidates in the overlay network through Alg.~\ref{Alg:1}. For real link capacity diversity range, $[10,120]$Mbps for example, FPSN reduces the regeneration time by almost $74.67\%$ against RS. Even for the least various link capacity diversity $[90,120]$Mbps, FPSN can reduce about $12.2\%$ of regeneration time compared with RS.
%\subsubsection{$P$ is not previously determined}
for SPSN, we jointly select the $10$ providers from the 13 survival nodes and the newcomer $v_n$ from the candidate set $V_n$. We ought to obtain a repairing topology with more high capacitated links than RS. Fig.\ref{Fig:1} shows that SPSN performs even better than RS at each bandwidth variance. At the point that link capacities vary between $[10,120]$Mbps, SPSN finally generates a repairing topology with regeneration time reduction of nearly $80.74\%$ compared to RS. %The trend also presents between FPSN and SPSN that the regeneration time decreases greater when link capacities variance is greater. But even at the least various link capacities point $[90,120]$Mbps, SPSN outperforms FPSN by $4.3\%$ or so.
%We notify the regeneration time of these three schemes appears different trend with the link capacity heterogeneity. From $U_1$ to $U_7$, the bottleneck bandwidth produced by RS changes from about 16Mbps to 95Mbps, with variation of 79Mbps. The variation of FPSN and SPSN are about 42Mbps and 24Mbps respectively. This implies that SPSN can suit the various link capacity environment better.

\subsection{Impact of system parameters}

We set the system scale as $1000$ storage nodes, and use an $(n=14,k=8)$ MDS code with $d=10$ providers in previous simulation. From Fig.\ref{Fig:1} we understand that the optimal selection of the newcomer and providers can obviously decrease the regeneration time, especially in largely heterogeneous networks. But $(14,8)$ MDS code is not universal and any $k \leq d \leq n$ works actually \cite{dimakis2010network}. For example, there exists a coding redundant storage system called Microsoft Azure Storage (MAS) employing $6$ storage nodes to recover the original file \cite{calder2011windows, huang2012erasure}. In this part of simulation, we will show the impact of the system scale and coding patterns on the regeneration time. Link capacity range in this part obeys the uniform distribution on the interval $[10,120]$Mbps.

\subsubsection{Impact of the number of storage nodes in the system}

We use $N$ to represent the number of nodes in the distributed storage system. More storage nodes in the system means more newcomer candidates as the value of $n$ and $d$ are fixed. However, node selection schemes always select high capacitated links as much as possible. The regeneration time will be affected more by link capacity range than the number of storage nodes in the system.

Results are shown in Fig.~\ref{Fig:2}. As the link capacity is uniformly distributed on an interval, for random selection, the size of the overlay network will not affect the probability of high capacitated links to be selected. We can see from Fig.~\ref{Fig:2} that the regeneration time of RS is fluctuating, but relatively stable at $2.06s$ for the bandwidth variance $[10,120]$Mbps. A similar trend presents: SPSN saves much regeneration time compared with RS. But as the network scale turns larger, regeneration time produced by SPSN decreases. This is because we are likely to obtain a better repairing topology with comparatively more newcomer candidates, as the total number of nodes increases in the system.

%\begin{figure}[!htb]
%\centering
%\hspace*{-0.15in}\includegraphics[width = 9cm]{./figure_ns/rd_fpsn_spsn_bign.pdf}
%\caption{ Effect of the number of storage nodes in the system: Regeneration time of the repairing topologies resulted from RS, FPSN, and SPSN for different network scale. The coding parameters are n=14, k=8, d=10, and link  varies between (10,120)Mbps. }
%\label{Fig:2}
%\end{figure}

\subsubsection{Impact of the coding patterns of the system}

%We only need $d$ providers from the $n-1$ survival nodes.
We use $(n=14,k=8)$ MDS code to introduce redundancy previously. In this part of experiments, we test the impact of coding patterns, {\em i.e.}, the value of $d$ and $n$, on the performance of our algorithms.
% When a storage node holding coded blocks fails, we select $10$ nodes to be providers out of the $13$.
%\begin{figure}[!htb]
%\centering
%\hspace*{-0.15in}\includegraphics[width = 9cm]{./figure_ns/rd_fpsn_spsn_d.pdf}
%\caption{Bottleneck bandwidth of of regeneration topologies resulted from RS, FPSN, and SPSN for different number of providers. The parameters are N=1000, n=20, k=8, and link  varies between (10,120)Mbps.}
%\label{Fig:3}
%\end{figure}

To test the impact of the number of providers, we set the value of $n$ normally as 20 and change the value of $d$ between $(8,19)$. Fig.~\ref{Fig:3} shows the impact of $d$ on regeneration time for the three schemes. From the figure we can see that the regeneration time of repairing topologies produced by both the schemes reduces as $d$ turns larger. Even the random scheme RS presents a regular decline, which coincides with the nature of regeneration codes. %The curve of SPSN nearly overlap at the end, because there is not that much choice of providers when the value of $d$ approaches $n-1$.
\begin{figure*}[!htbp]
   \subfigure[Different number of storage nodes in the system: coding parameters are n=14, k=8, d=10. ]{\includegraphics[width=0.333\textwidth]{./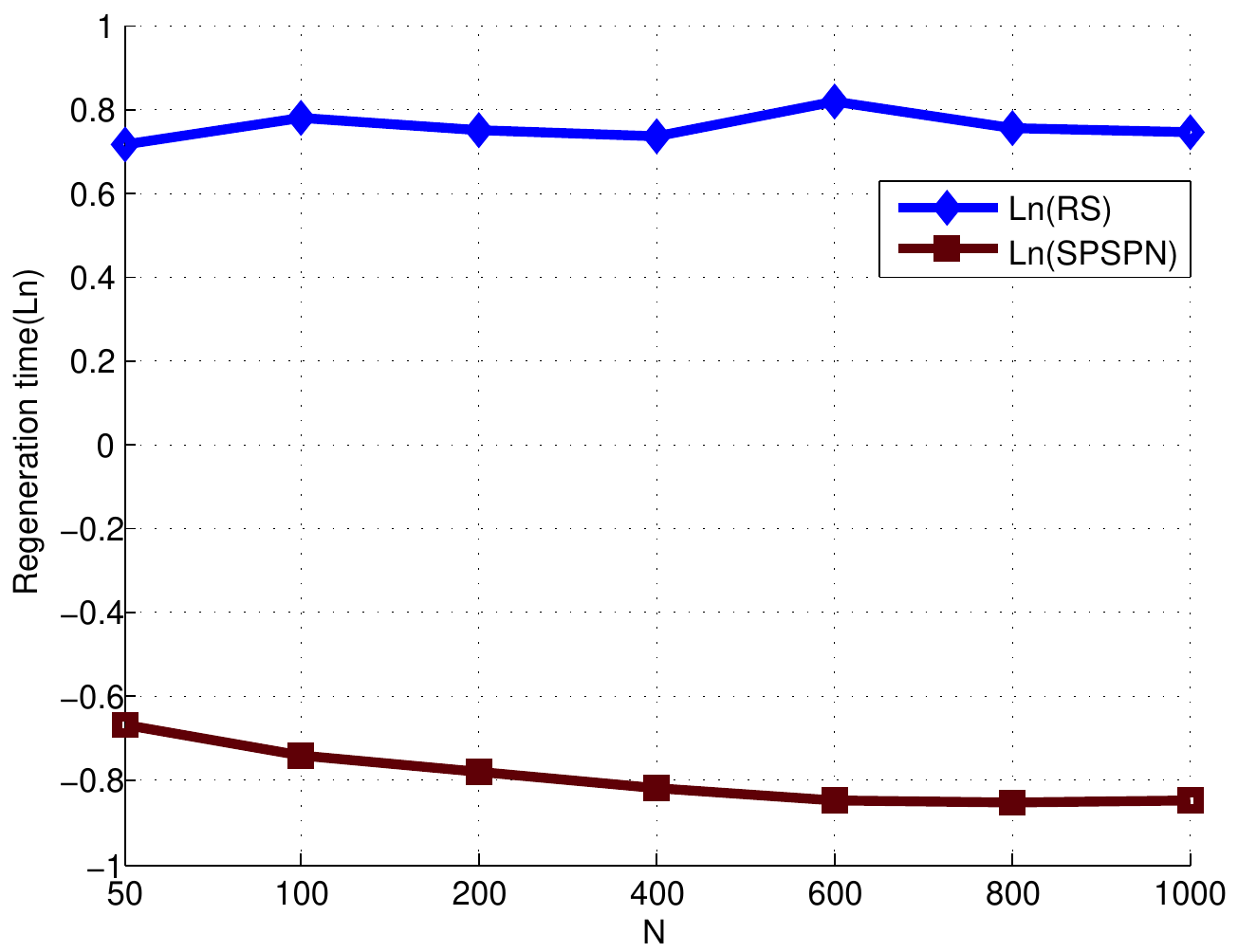}
   \label{Fig:2} }
  \subfigure[Different number of providers: other parameters are N=1000, n=20, k=8, ]{\includegraphics[width=0.333\textwidth]{./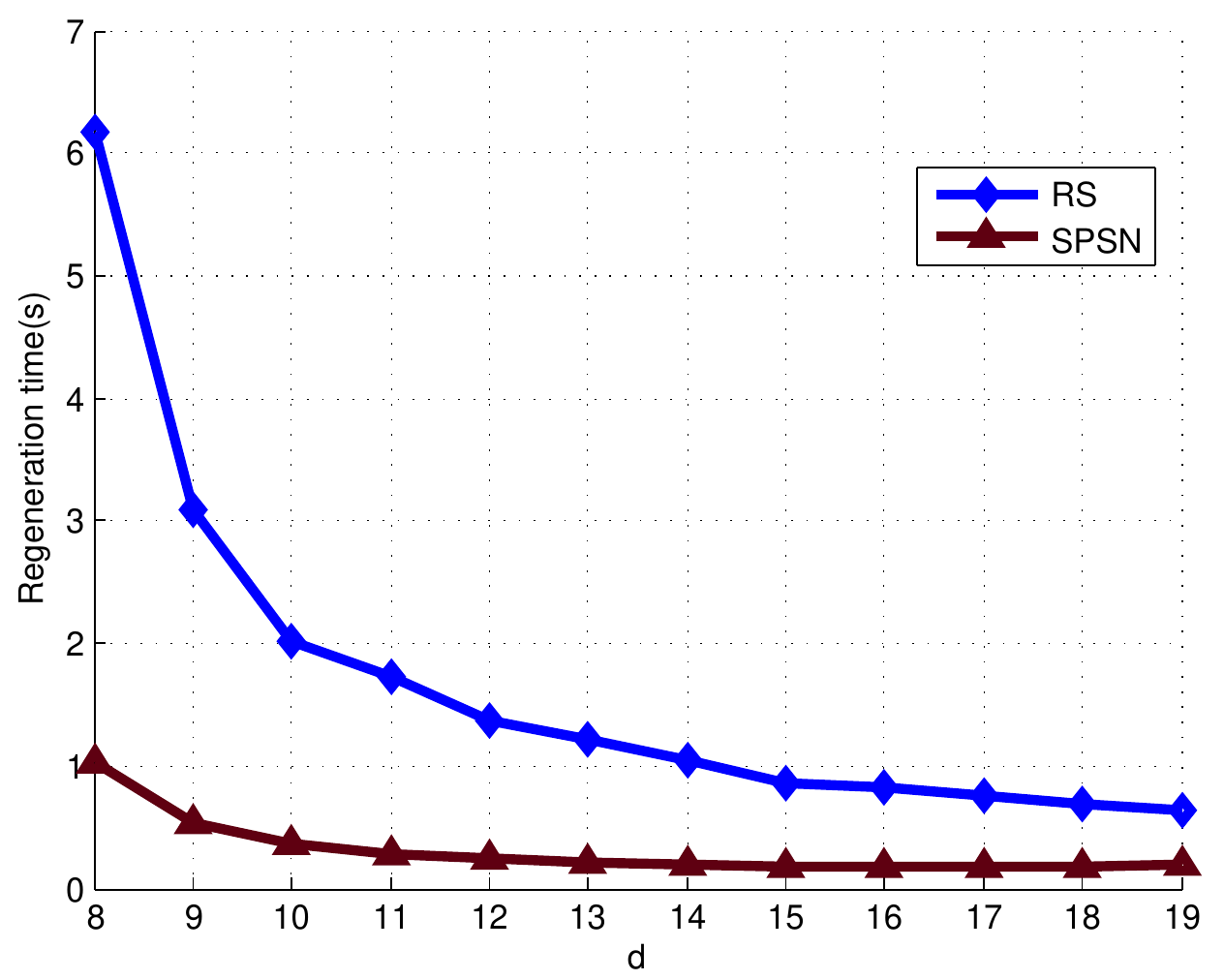}
   \label{Fig:3}}
  \subfigure[Different number of storage nodes holding coded data blocks: other parameters are N=1000, k=8, d=10 ]{\includegraphics[width=0.333\textwidth]{./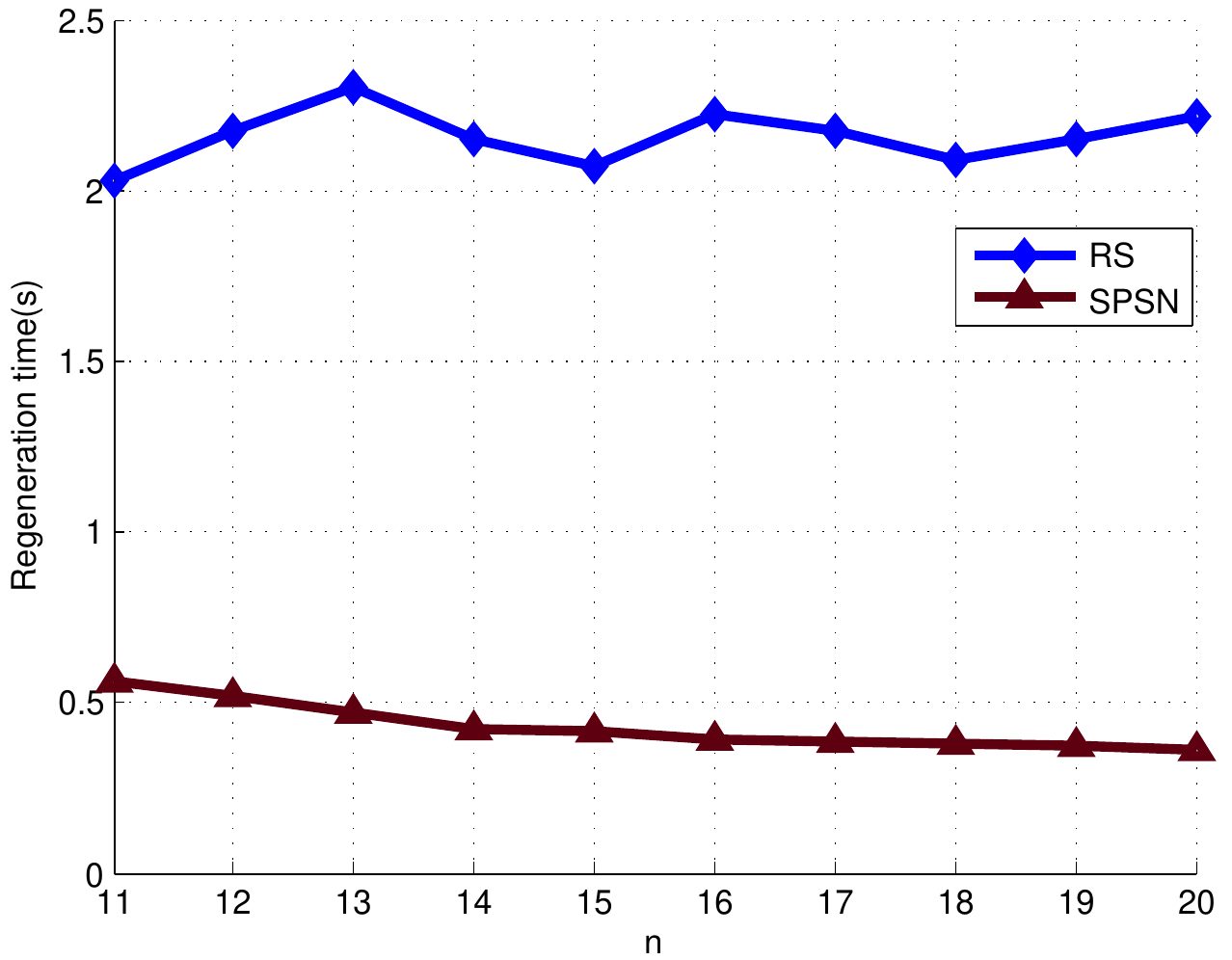}
   \label{Fig:4} }
\caption{Regeneration time of topologies resulted from RS and SPSN for different system parameters. Link capacity varies between [10,120]Mbps in this part of simulation. M=100Mb.}%\label{fig:simulation:node}
\end{figure*}

%\begin{figure}[!htb]
%\centering
%\hspace*{-0.15in}\includegraphics[width = 9cm]{./figure_ns/rd_fpsn_spsn_n_even.pdf}
%\caption{Bottleneck bandwidth of of regeneration topologies resulted from RS, FPSN, and SPSN for different number of storage nodes. The parameters are N=1000, k=8, d=10, and link b varies between (10,120)Mbps.}
%\label{Fig:4}
%\end{figure}

We need to select $d$ providers from the $n-1$ survival nodes for data regeneration. To test how the number of storage nodes in $V_p$ affects the regeneration time, we fix $d=10$ as previously and change the value of $n$. Fig.~\ref{Fig:4} shows the simulation results. At each value of $n$, SPSN reduces the regeneration time dramatically than RS. Providers and newcomer are obtained in an absolutely random way by RS. Regeneration time of the scheme RS fluctuates as the value of $n$ changes. Because the number of providers $d$ and bandwidth range $[10,120]$Mbps is fixed, it fluctuates around $2.17s$. %Although the $d$ providers are randomly selected from $n-1$ survival nodes in scheme FPSN, regeneration time is stable at about $0.55s$ for different value of $n$, because of optimal selection the newcomer.
For SPSN, there will be more choice of the $d$ providers as $n$ turns larger. Its regeneration time declines obviously as $n$ increases in the figure.

%The resulting bandwidth is much larger for SPSN as n increases. This is because the set of provider candidates becomes larger. The trend for FPSN appear to decline as n increases to 18 and beyond. As the number of total storage nodes in the system is fixed at $N$, expanding of $V_p$ means shrinking of $V_n$. The set of newcomer candidates is reduced while the enlargement of providers will do no good to the previously determined $P$.
\subsection{Impact of flexible end-to-end traffic}

In this part of evaluation, we test the impact of flexible end-to-end traffic on regeneration time for the same settings.% We still use $(n=14,k=8)$ MDS code with $d=10$, and test on 7 different link capacity distributions. The size of the original file is 100Mb.

\subsubsection{Benefit comparison between the two approaches}

To compare the benefit brought from the flexible amount of data blocks generated by each provider and the node selection schemes, we test FRS (RS with flexible end-to-end traffic), and our node selection scheme SPSN for the same original file object. FLEX algorithm, selecting the providers and the newcomer with flexible $\beta^*$, is also considered here. From Fig.~\ref{Fig:7} we can see that SPSN enjoys shorter regeneration time than FRS at every link capacity distribution. At the most contrasting link capacity range $[0.3,120]$Mbps, the simple flexible method needs $3.53 \times$ regeneration time compared with SPSN. By the results we may conclude that nodes selection approach can reduce the regeneration time more radically. The scheme FLEX performs a bit better than SPSN. Flexible amount of data blocks computed according to the available link capacity further cuts down the regeneration time in heterogeneous environment. When bandwidth heterogeneity becomes modest from $[50,120]$Mbps, the gap between the two schemes narrows, even can be overlooked at $[90,120]$Mbps.

%But when the  the difference of regeneration time turns smaller, even can be ignored. In special, for SPSN, the trend of regeneration time with the variation of the bandwidth heterogeneity may not easily to be noticed in Fig.~\ref{Fig:7} for the coordinate.
\begin{figure}[!htb]
\centering
\hspace*{-0.15in}\includegraphics[width = 8cm]{./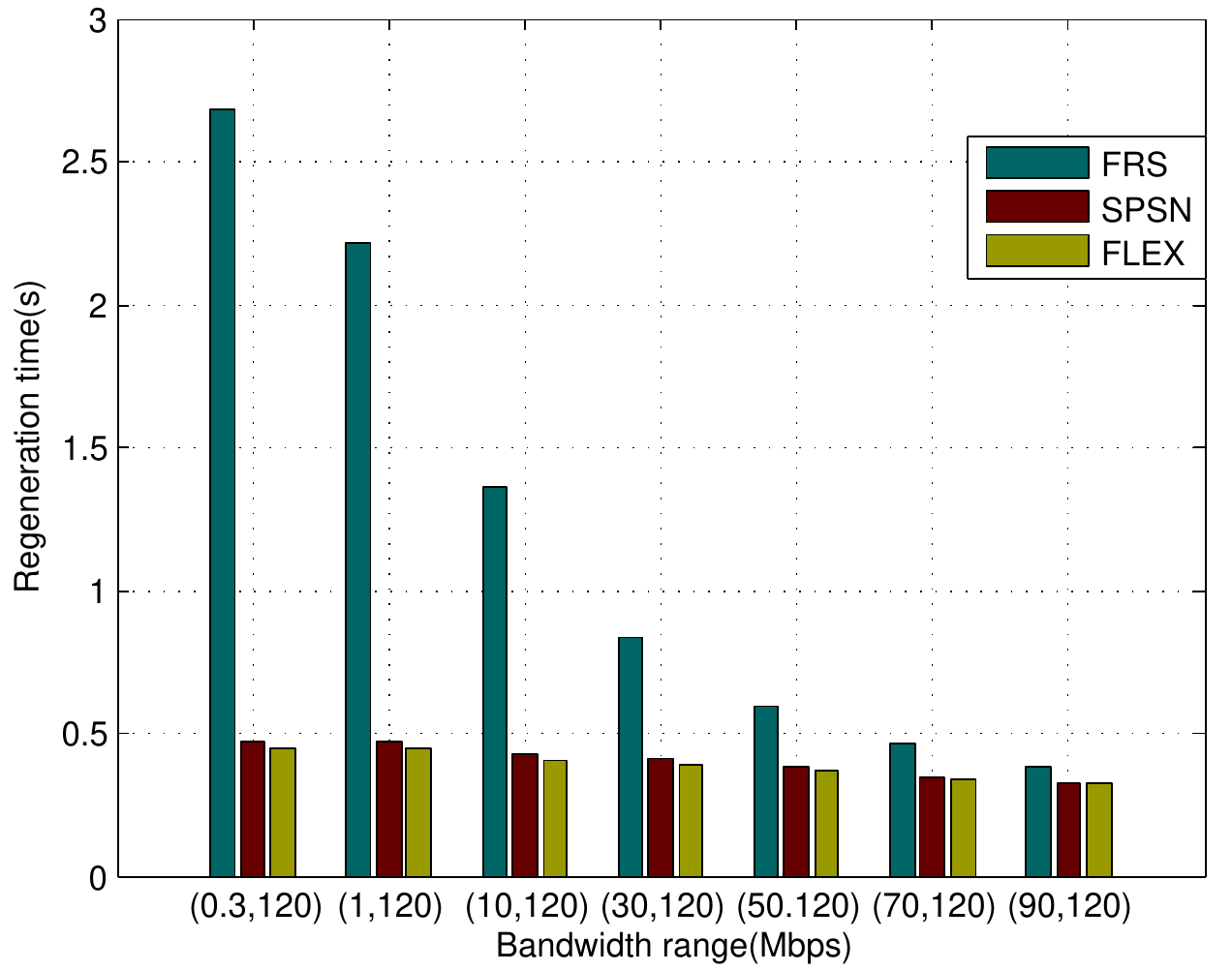}
\caption{Regeneration time of topologies resulted from FRS, SPSN and FLEX for different link capacity distributions. The parameters are N=1000, n=14, k=8, d=10, and M=100Mb.}
\label{Fig:7}
\end{figure}

\subsubsection{Evaluation of FLEX}

\begin{figure}[!htb]
\centering
\hspace*{-0.15in}\includegraphics[width = 9cm]{./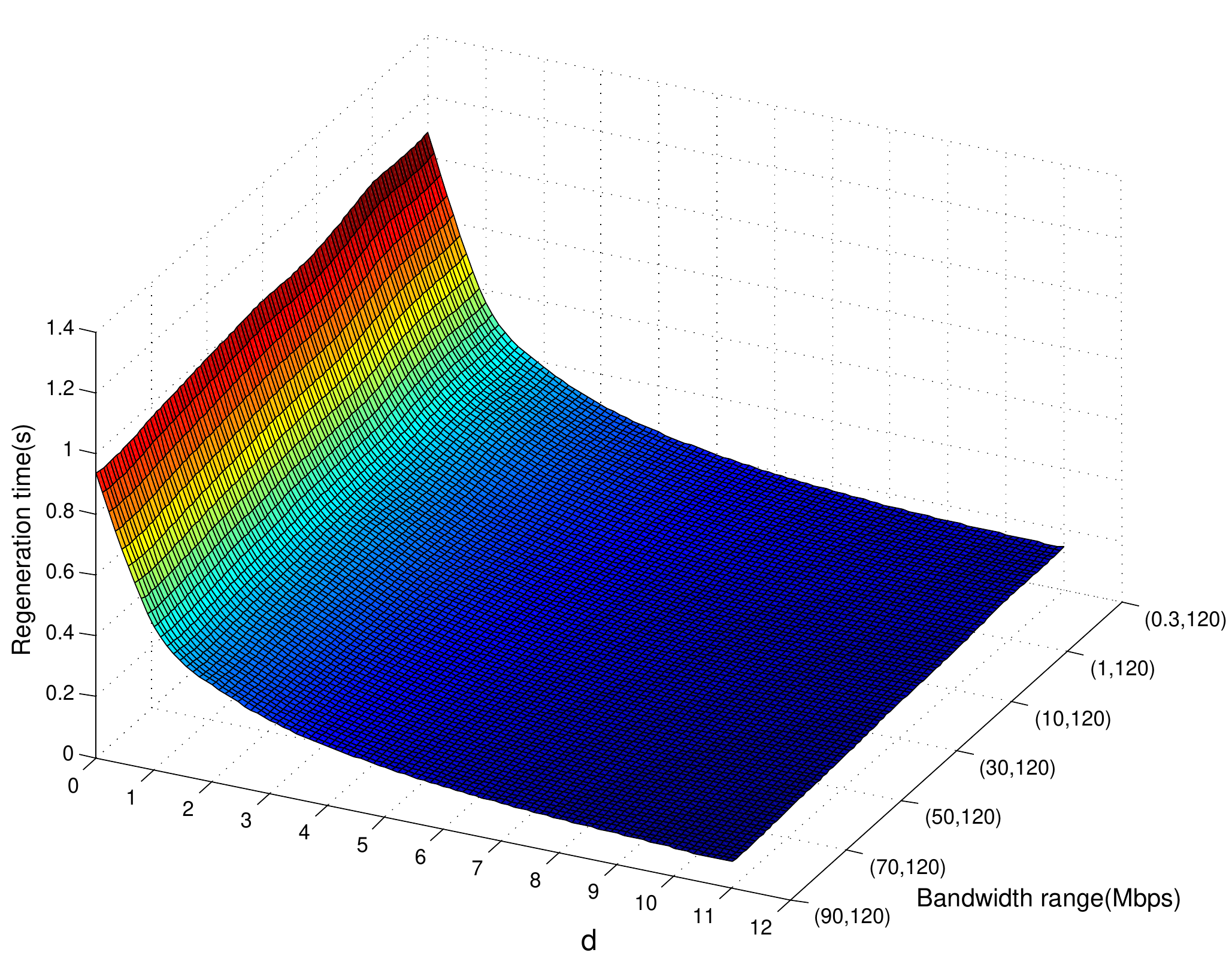}
\caption{Regeneration time of topologies resulted from FLEX for different link capacity distributions and the number of providers. The parameters are N=1000, n=14, k=8, and M=100Mb.}
\label{Fig:8}
\end{figure}

We further study the performance of FLEX for bandwidth heterogeneity and the value of $d$. Results are shown in Fig.~\ref{Fig:8}. It can be found that FLEX presents a similar trend as the scheme SPSN. Regeneration time of FLEX reduces as the number of providers increases, and as the bandwidth heterogeneity weakens.

%We compare the performance of Alg.~\ref{Alg2} and \ref{Alg:3}, {\em i.e.}, the node selection schemes SPSN and FLEX. Results are shown in Fig.~\ref{Fig:8}. We can see that the regeneration time of FLEX  a bit shorter compared to SPSN.
\subsection{Evaluation of SPSN-F}

\begin{figure*}[!htbp]
   \subfigure[Different number of storage nodes in the system: coding parameters are n=14, k=8, d=10. ]{\includegraphics[width=0.33\textwidth]{./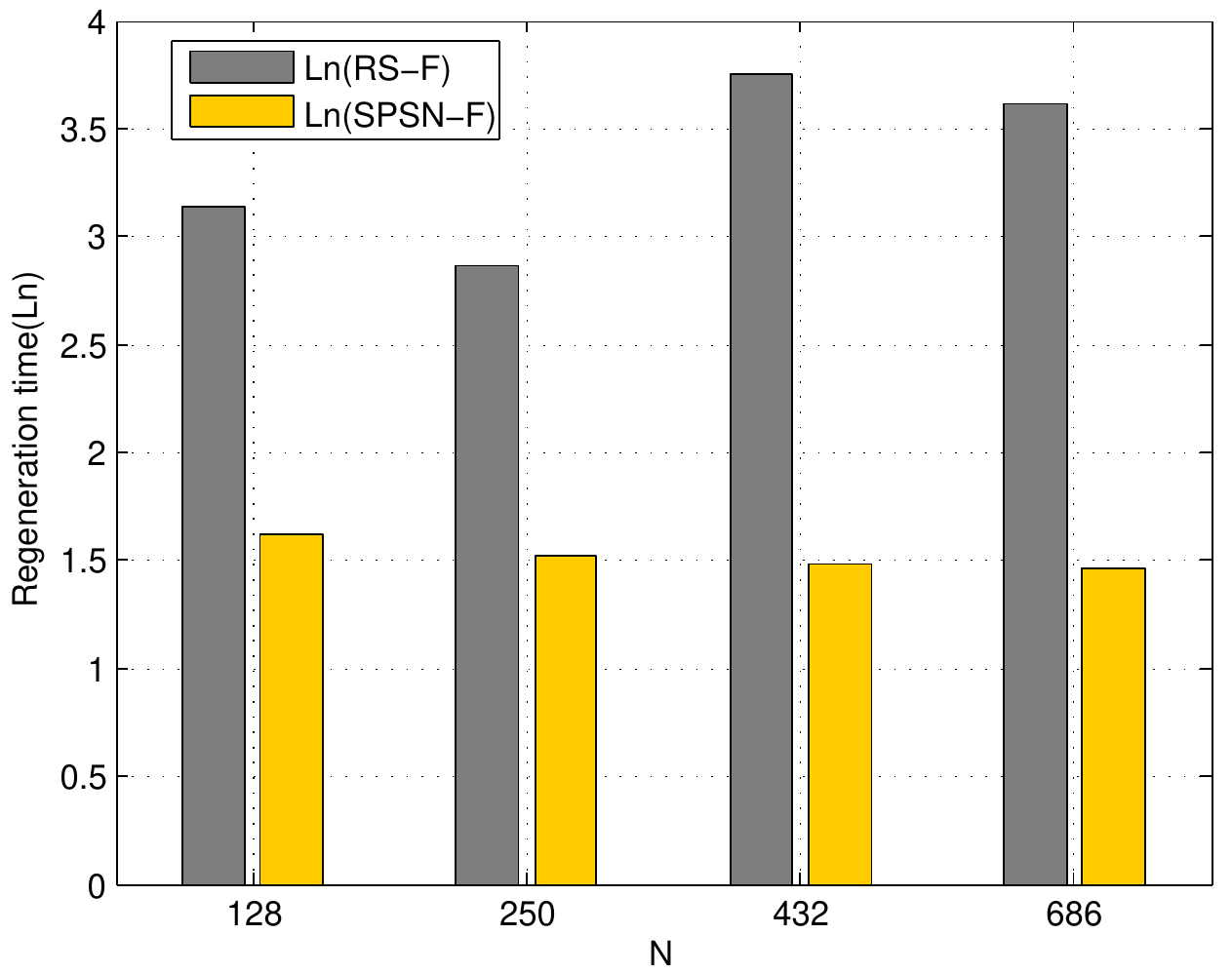}
   \label{Fig:9} }
  \subfigure[Different number of providers: other parameters are N=128, n=20, k=8, ]{\includegraphics[width=0.33\textwidth]{./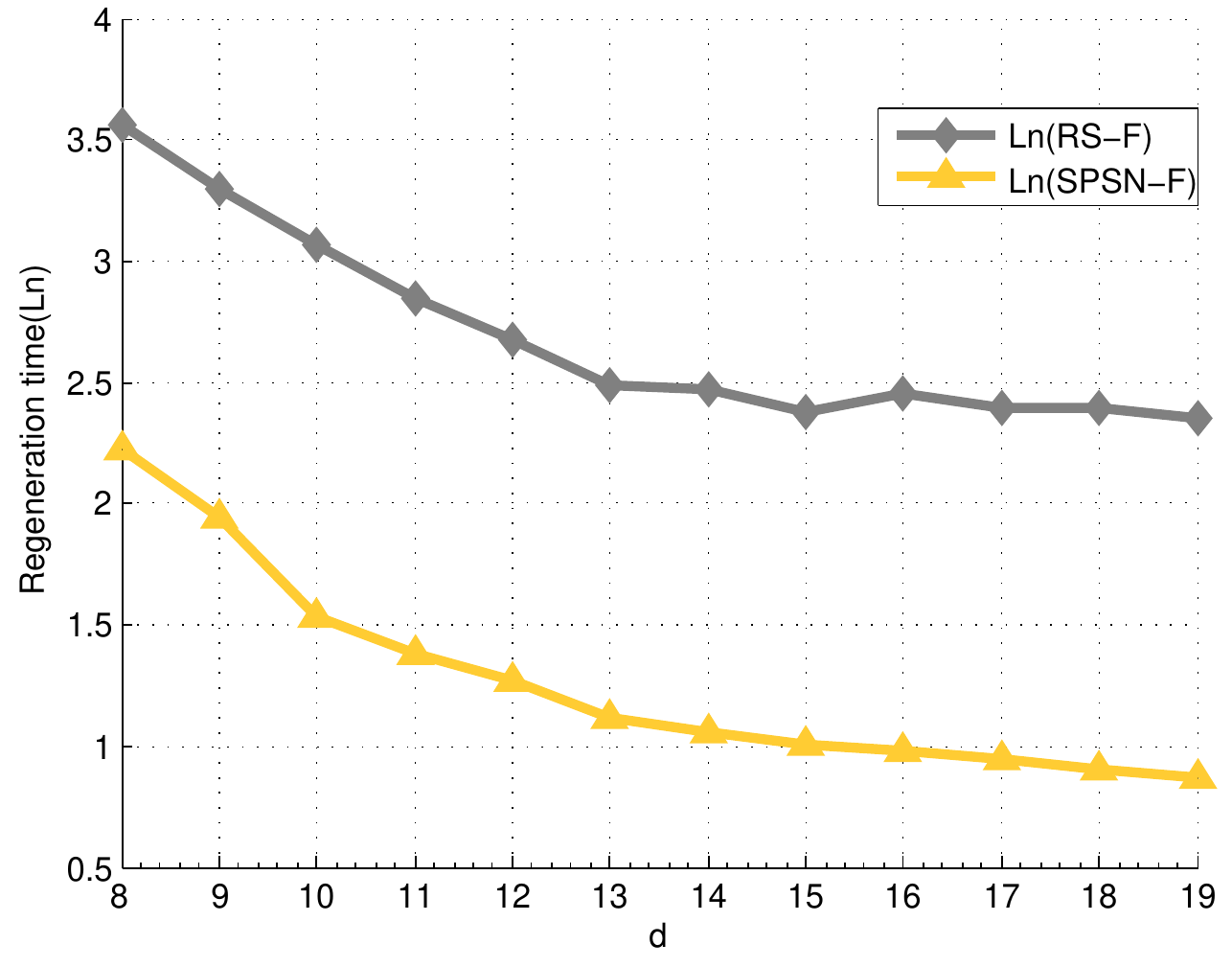}
   \label{Fig:10}}
  \subfigure[Different number of storage nodes holding coded data blocks: other parameters are N=128, k=8, d=10 ]{\includegraphics[width=0.33\textwidth]{./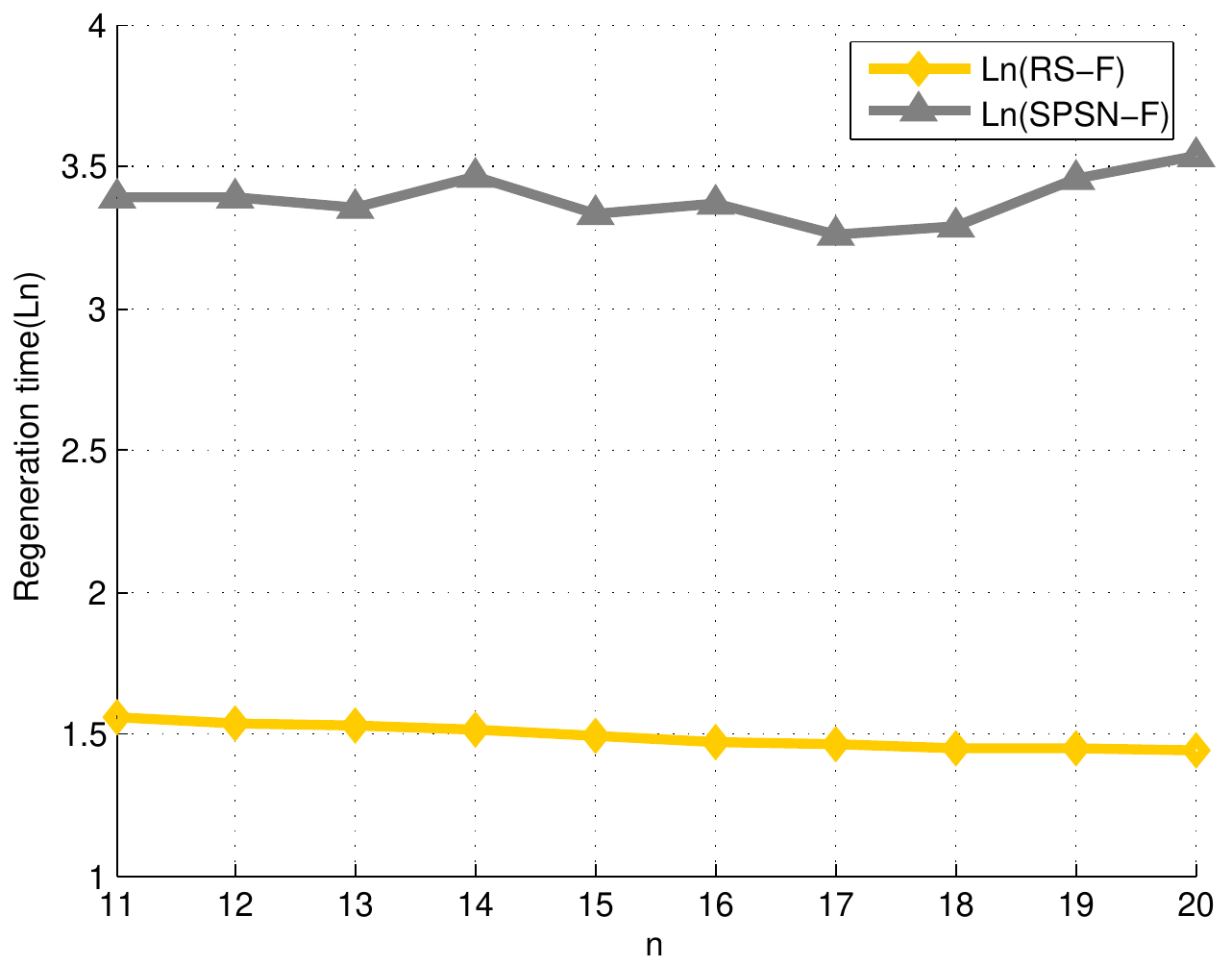}
   \label{Fig:11} }
\caption{Regeneration time of topologies resulted from RS-F and SPSN-F for different system parameters. M=100Mb.}%\label{fig:simulation:node}
\end{figure*}

There are three tiers of links in Fat-tree architecture, shown in Fig.~\ref{Fig:5}. Physical bandwidths of links in each level differ orders of magnitude intuitively, moving up the network hierarchy \cite{benson2010network, benson2010understanding}. The bandwidths of links at the bottom layer, {\em i.e.,} links between edge switches and storage servers, are set to obey the uniform distribution on the interval $[1,120]$Mbps. And the bandwidths of links on the middle layer and the upper layer, are set to be five and ten times of the bottom layer.

Simulation results are shown in Fig.~\ref{Fig:9}-\ref{Fig:11}. We take the natural logarithm of the regeneration time to show the relative improvement. The regeneration time reduces obviously when SPSN-F is employed. We examine the impact of network scale, and the coding patterns to the regeneration time.

Network scale is measured by $K$ in Fat-tree architecture. There are $K^3/4$ storage nodes in the system. In Fig.~\ref{Fig:9}, the value of $K$ is 8, 10, 12, 14. Regeneration time decreases apparently at each network scale. Random selection scheme still presents fluctuating regeneration time because the network scale matters nothing to random operations. In the following experiment, $K$ is set to be 8, with 128 storage nodes in the Fat-tree topology.

In Fig.~\ref{Fig:10}, both of the two selection schemes show a decline trend as the parameter $d$ turns larger. This lies in the inherent property of regenerating codes. The amount of data transmitted from each provider $\beta$ will turn smaller when $d$ grows larger. The advantage brought about by SPSN-F can be recognized by the big gap between the two curves.

The number of storage nodes holding coded blocks affects $V_p$. We can select $d$ providers from a larger candidate set as $n$ grows larger. The regeneration time of SPSN-F exhibits a slightly downward trend in Fig.~\ref{Fig:11}. We can see that the selection of providers paly a less crucial role.

\section{Conclusion and Future Work} \label{sec: Conclusion}
In this paper, we focus on the problem of reducing regeneration time in overlay networks and real-world Fat-tree topologies. We analyze the data recovery process and find that the selection of participating nodes can seriously affect the regeneration time, especially in data center networks with heterogeneous link capacities. To reduce the regeneration time, we consider the selection of the newcomer and providers in different topologies. We have also incorporated the observation of flexible end-to-end traffic to enhance our node selection schemes and finally propose SPSN and FLEX algorithms to construct repairing topologies in overlay networks. Flows of data regeneration in real-world data center networks show special features. We analyze the regeneration time in Fat-tree architecture and propose the node selection algorithm SPSN-F to reduce the regeneration time. Experimental results show that the data regeneration time can be reduced obviously using our scheme, even greater in more heterogeneous networks, compared with random determination of the providers and newcomer. %Considering flexible amount of data blocks from each provider into our scheme can further reduce the regeneration time.

Optimal node selection schemes are proposed based on the available bandwidths between storage nodes in distributed storage systems. %There are still other heterogeneous factors of the system that may affect the regeneration time, such as the inherent computing efficiency of different storage nodes.
Experiments in this paper are based on NS2 as in the study of data center traffic conducted by T. Benson {\em et al.} \cite{benson2010understanding}. %We are making efforts to %how to select the nodes considering more comprehensive heterogeneity environment to further reduce the regeneration time, and
%develop prototype system to test our node selection schemes on real data in a large cluster.
We aim to implement a prototype of our node selection scheme, and evaluate it in a real data center.

\normalsize

% Generated by IEEEtran.bst, version: 1.13 (2008/09/30)

\end{document}